\renewenvironment{proof}{\noindent\emph{Proof.\ }}{\qed\par\smallskip}
\newcommand{\nobrackettag}[0]{\def\tagform@##1{\maketag@@@{##1}}}
 \newcommand{\B}{\mathcal{B}}
\newcommand{\C}{\mathcal{C}}
 \renewcommand{\L}{\mathcal{L}}
\newcommand{\M}{\mathcal{M}} 
\renewcommand{\O}{\mathcal{O}} \renewcommand{\P}{\mathcal{P}}
\renewcommand{\S}{\mathcal{S}}
\newcommand{\tup}[1]{\langle #1\rangle}   
\newcommand{\set}[1]{\{#1\}}                      
\newcommand{\true}{\mathsf{true}}
\newcommand{\false}{\mathsf{false}}
\newcommand{\ISA}{\sqsubseteq}
\newcommand{\BOX}[1]{ [\!{-}\!] #1}
\newcommand{\DIAM}[1]{\langle\!{-}\!\rangle #1}
\newcommand{\inadom}{\ensuremath{\textsc{live}}}
\newcommand{\limp}{\rightarrow}
\newcommand{\uml}{\M}
\newcommand{\ocl}{\O}
\newcommand{\chartset}{\S}
\newcommand{\procset}{\P}
\newcommand{\cl}[1]{\texttt{#1}}
\newcommand{\op}[1]{\mathsf{#1}}
\newcommand{\oclq}[1]{\mathit{#1}}
\newcommand{\balsa}{BAUML\xspace}
\newcommand{\counter}{\C\xspace}
\newcommand{\mulpers}{\ensuremath{\mu\L_p}\xspace}
\newcommand{\comp}[3]{\textsc{cmp}_{#1}^{#2}(#3)}
\newcommand{\art}[1]{\textsc{artifacts}(#1)}
\newcommand{\artcl}[1]{\textsc{a-classes}(#1)}
\newcommand{\subart}[1]{\textsc{a-states}(#1)}
\newcommand{\parentart}[1]{\textsc{art}_{\cl{#1}}}
\newcommand{\chart}[1]{S_{#1}}
\newcommand{\eventset}{E}
\newcommand{\proc}[1]{P_{#1}}
\newcommand{\tasks}[1]{\textsc{tasks}(#1)}
\newcommand{\conds}[1]{\textsc{conditions}(#1)}
\newcommand{\slabel}[1]{\mathsf{#1}}
\newcommand{\ro}[1]{#1_r}
\newcommand{\rw}[1]{#1_{rw}}
\newcommand{\cval}[1]{\mathsf{#1}}
\newcommand{\tstate}[1]{\textsc{term}_{#1}}
\newcommand{\instr}[1]{\mathsf{#1}}
\newcommand{\halt}{\instr{HALT}}
\newcommand{\inc}[2]{\instr{INC}(#1,#2)}
\newcommand{\cdec}[3]{\instr{CDEC}(#1,#2,#3)}
\newcommand{\goto}[1]{\instr{GOTO}\ #1}
\newcolumntype{R}{>{\raggedleft\arraybackslash}X}
\newcolumntype{L}{>{\raggedright\arraybackslash}X}
\newcolumntype{C}{>{\centering\arraybackslash}X}
\newcommand{\defaulttablegraphics}[3]{
    \begin{minipage}{#2}\noindent

\vspace*{.05cm}

            \includegraphics[width=#2]{#1}

\vspace*{.05cm}
    \end{minipage}
}
\newcommand{\tableg}[2]{
        \defaulttablegraphics{#1}{#2}{-9cm}
}
\newcommand{\domr}[1]{#1|_1}
\newcommand{\imr}[1]{#1|_2}
\newcommand{\source}[2]{\textsc{src}_{#2}(#1)}
\newcommand{\target}[2]{\textsc{trg}_{#2}(#1)}
\newtheorem{exampleAux}{Example}[section]
\newtheorem{theorem}{Theorem}[section]
\newtheorem{corollary}[theorem]{Corollary}
\newfont{\mycrnotice}{ptmr8t at 7pt}
\newfont{\myconfname}{ptmri8t at 7pt}
\begin{document}

\title{Verifiable UML Artifact-Centric Business Process Models}

\numberofauthors{2}
\author{
%
\alignauthor
Diego Calvanese
\qquad
Marco Montali\\
      \affaddr{KRDB Research Centre for Knowledge and Data}\\
      \affaddr{Free University of Bozen-Bolzano}\\
      \email{\{calvanese,montali\}@inf.unibz.it}
\alignauthor
Montserrat Esta\~{n}ol
\qquad
Ernest Teniente\\
      \affaddr{Dept.\ of Services and Systems Engineering}\\
      \affaddr{Universitat Polit\`{e}cnica de Catalunya}\\
      \email{\{estanyol,teniente\}@essi.upc.edu}
}

\maketitle
\begin{abstract}
 Artifact-centric business process models have gained increasing momentum
  recently due to their ability to combine structural (i.e., data related) with
  dynamical (i.e., process related) aspects.  In particular, two main lines of
  research have been pursued so far: one tailored to business artifact modeling
  languages and methodologies, the other focused on the foundations for their
  formal verification. In this paper, we merge these two lines of research, by
  showing how recent theoretical decidability results for verification can be
  fruitfully transferred to a concrete UML-based modeling methodology. In
  particular, we identify additional steps in the methodology that, in
  significant cases, guarantee the possibility of verifying the resulting
  models against rich first-order temporal properties. Notably, our results can
  be seamlessly transferred to different languages for the specification of the
  artifact lifecycles.
\end{abstract}

 \category{D.2.4}{Software Engineering}{Software/Program
   Verification}[formal methods]
 \category{H.2.1}{Database Management}{Logical Design}[Data models]


\keywords{Business artifacts, formal verification, UML, BPM} 

\section{Introduction}
%
A business process consists of several activities performed in coordination in order to achieve a business goal \cite{Weske2007}. Since business processes are key to achieving an organization's goals, they should be free of errors and performed in an optimal way.

Traditional approaches to business process modeling have been based on a \emph{process}- or \emph{activity}-centric perspective, that is, they have tended to focus on the ordering of the activities that need to be carried out, underspecifying or ignoring the data needed by the process.

An alternative to activity-centric process modeling is the \emph{artifact}-centric (or \emph{data}-centric) approach. Artifact-centric process models represent both structural (i.e. the data) and dynamic (i.e. the activities or tasks) dimensions of the process. For this reason, they have grown in importance in recent years.
One of the research lines in this topic is focused on finding the best way of representing artifact-centric process models. Several graphical alternatives have been proposed, such as Guard-Stage-Milestone (GSM) models \cite{Hull2011b}, BPMN with data \cite{Meyer2013}, PHILharmonic Flows \cite{Kunzle2011} or a combination of UML and OCL \cite{Estanol2013}, to mention a few examples.

Despite this variety, it is important to guarantee the correctness of these models. In order to do so, a second line of research has focused on the foundations for the formal verification of artifact-centric business process models. The greatest part of these works represent the business process using models grounded on logic, such as Data-centric Dynamic Systems (DCDS) \cite{Hariri2013,Hariri2013b,Calvanese2012}. However, the problem with these models grounded on logic is that they are not practical at the business level as they are complex and difficult to understand by the domain experts.

In this paper, we merge these two lines of research, by showing how recent theoretical decidability results for verification can be fruitfully transferred to a concrete UML-based modeling methodology. In particular, we identify sufficient conditions over the models used by this methodology which guarantee decidabilty of verification. We also show how decidability of verification can be achieved when one of such conditions is not fulfilled. These results represent a significant step forward in the area since, to our knowledge, this is the first time that conditions for decidability are stated on models understandable by model experts, which are specified at a high level of abstraction.


As an aside result of our work, we identify a particular class of models, called \emph{shared instances}, characterized by the fact that there are two (or more) artifacts which share a read-only object. In this particular case, decidability is achieved by limiting the number of static objects with which an artifact can be related, by ensuring that all queries are navigational starting from the artifact and by imposing that no path of associations among two classes is navigated back and forth. Under these conditions, we can achieve decidability of verification without having to restrict reasoning over a bounded number of artifacts. More importantly, these results are not only applicable to our UML and OCL models, but can be extended to other languages for artifact-centric process models that fulfill the same conditions.

The rest of the paper is structured as follows. Section 2 introduces our framework. Section 3 presents our example, over which we will show the decidability conditions. Section 4 reports the decidability results. Finally, Sections 5 and 6 review the related work and present our conclusion.


\section{The \balsa Framework}

To facilitate the analysis of artifact-centric business process models, we base our work on the BALSA framework \cite{Hull2008}. It establishes four different dimensions that should be present in any artifact-centric business process model:

\begin{compactitem}
\item \textbf{Business Artifacts:} Business artifacts represent the data that the business requires to achieve its goals. They have an identifier and may be related to other business artifacts. One way of representing business artifacts is by using an entity-relationship model or a UML class diagram. Both diagrams are able to represent the business artifacts, their relationships and establish constraints on both.
\item \textbf{Lifecycles:} Lifecycles are used to represent the evolution of an artifact during its life, from the moment it is created until it is destroyed. Intuitively, they can be graphically represented by means of statecharts or state machine diagrams.
\item \textbf{Services:} Services are atomic units of work in the business process. They are in charge of evolving the process. As such, they make changes to artifacts by creating, updating and deleting them. They may be represented in different ways: alternatives range from using natural language to logic or with operation contracts specified in OCL.
\item \textbf{Associations:} Associations establish restrictions on the way services may change artifacts, that is, they impose constraints on services. They may be represented using a procedural representation, such as a workflow or BPMN, or using a declarative representation, such as condition-action rules.
\end{compactitem}
In contrast to artifacts, whose evolution we wish to track, in many instances, businesses need to keep data in the system that does not really evolve. In order to distinguish this data from artifacts, we will refer to it as \emph{objects}.

In this paper, we adopt the instantiation of BALSA in \cite{Estanol2013,Estanol2013b}, representing the aforementioned dimensions using UML \cite{UML2-4-1Super} and OCL \cite{OCL2-4}. Both UML and OCL are standard languages generally used for, but not limited to, conceptual modeling. In particular, we use: UML class diagrams for artifact, object, and relationship types; UML state machine diagrams for artifact lifecycles; UML activity diagrams for associations, and OCL operation contracts for services.
We call this concrete modelling approach \emph{BALSA UML} (\balsa for short).
However, this does not restrict our result to this subset of diagrams: the results are extendable to the rest of alternatives.

Technically, we define a \balsa model $\B$ as a tuple $\tup{\uml,\ocl,\chartset,\procset}$, where:
\begin{compactitem}
\item  $\uml$ is a UML class diagram, in which some classes represent (business) artifacts. Given two classes $A$ and $B$, we say that $A$ is a $B$, written $A \ISA_\uml B$, if $A = B$ or $A$ is a direct or indirect subclass of $B$ in $\uml$. Furthermore, given a class $A$ and a (binary) association $R$ in $\uml$, we write $A =_\uml \exists R$ ($A =_\uml \exists R^-$ resp.) if $A$ is the domain of $R$ (image of $R$ resp.) according to $\uml$. We also denote by $\domr{R}$ and $\imr{R}$ the role names attached to the domain and image classes of $R$.
 We denote the set of artifacts in $\uml$ as $\art{\uml}$ and, when convenient, we use $\art{\B}$ interchangeably. Each artifact is the top class of a hierarchy whose leaves are subclasses with a dynamic behavior (their instances change from one subclass to another). Each subclass represents a specific state in which an artifact instance can be at a certain moment in time. We denote by $\artcl{\uml}$ ($\artcl{\B}$ resp.) the set of such subclasses, including the artifacts themselves. Given a class $\cl{S} \in \artcl{\uml}$, we denote by $\parentart{S}$ the class $\cl{S}$ itself if $\cl{S}$ is an artifact, or the class $\cl{A}$ if $\cl{A}$ is an artifact and $\cl{S}$ is a possibly indirect subclass of $\cl{A}$.
Given an artifact $\cl{A} \in \art{\uml}$, we denote by $\subart{\cl{A}}$ the set of leaves in the hierarchy with top class $\cl{A}$.
\item $\ocl$ is a set of OCL constraints over $\uml$.
\item $\chartset$ is a set of UML state transition diagrams, one per artifact in $\art{\uml}$. In particular, for each artifact $\cl{A} \in \art{\uml}$, $\chartset$ contains a state transition diagram $\chart{\cl{A}} = \tup{V,v_0,\eventset,T}$, where $V$ is a set of states, $v_0 \in V$ is the initial state, $\eventset$ is a set of events, and $T \subseteq V \times \eventset \times \mathit{OCL}_\uml \times V$ is a set of transitions between pairs of states, each labelled by an event in $\eventset$ and by an OCL condition over $\uml$. In particular, the states $V$ of $\chart{\cl{A}}$ exactly mirror the classes in $\subart{\cl{A}}$, so that $\chart{\cl{A}}$  encodes the allowed event-driven transitions of an artifact instance of type $\cl{A}$ from the current state to a new subclass (i.e., a new artifact state). Moreover, the initial transitions leading to $v_0$ always result in the creation of an instance of the artifact being specified by $\chart{\cl{A}}$.
\item $\procset$ is a set of UML activity diagrams, such that for every transition diagram  $\tup{V,v_0,\eventset,T} \in \chartset$, and for every event $\varepsilon \in \eventset$, there exists one and only one activity diagram $\proc{\varepsilon} \in \procset$. With a slight abuse of notation, given a state transition diagram $\chart{} \in \chartset$, we denote by $\procset_{\chart{}} \subseteq \procset$ the set of activity diagrams referring to all events appearing in $\chart{}$.
\end{compactitem}
 In this paper, we will not impose any restriction on the control-flow structure of such activity diagrams, but only on their atomic tasks and conditions. For this reason, given an artifact $\cl{A} \in \art{\uml}$, we respectively denote by $\tasks{\cl{A}}$ and $\conds{\cl{A}}$ the set of atomic tasks and conditions appearing in the state transition diagram $\chart{a}$, also considering all activity diagrams related to $\chart{\cl{A}}$. We then define $\tasks{\B} = \bigcup_{\cl{A} \in \art{\uml}} \tasks{\cl{A}}$ and $\conds{\B} = \bigcup_{\cl{A} \in \art{\uml}} \conds{\cl{A}}$. Moreover, we assume that every task in $\tasks{\cl{A}}$ that does not belong to the activity diagram of an initial transition takes in input an instance of the artifact type in $\chart{a}$ and that every condition in $\conds{\cl{A}}$ is in the scope of such artifact.


In \balsa, conditions are expressed as OCL queries over the UML class diagram $\uml$. Similarly, each (atomic) task is associated to a so-called \emph{operation contract}, which expresses a precondition on the executability of the task, and a postcondition describing the effect of the task, both formalized in terms of OCL queries over $\uml$. The semantics of the operation contract is that the task can only be executed when the current information base satisfies its precondition, and that, once executed, the task brings the information base to a new state that satisfies the task postcondition. In this light, tasks represent services in the terminology of BALSA.

\section{Example}
\label{sec:example}




We present a relevant example based on a system for a company that registers orders from customers, and stores information about the orders made by the company to its suppliers. Our example is likely to specify a simplified version of the artifact-centric process models of an online shop like Amazon. We use a set of UML diagrams and OCL operation contracts to represent the example in a modeler-friendly way according to the BALSA framework.  

Figure \ref{fig:shared_objects} shows the class diagram that represents the business artifacts and classes of our example. Artifacts are characterized by having several substates, represented as subclasses, with a disjointness constraint. This constraint is necessary to ensure that each artifact evolves correctly. In addition, artifacts have a lifecycle, in our case represented as a UML state machine.

\begin{figure}
\begin{minipage}[t]{\columnwidth}
\vspace{0pt}
\includegraphics[width=\columnwidth]{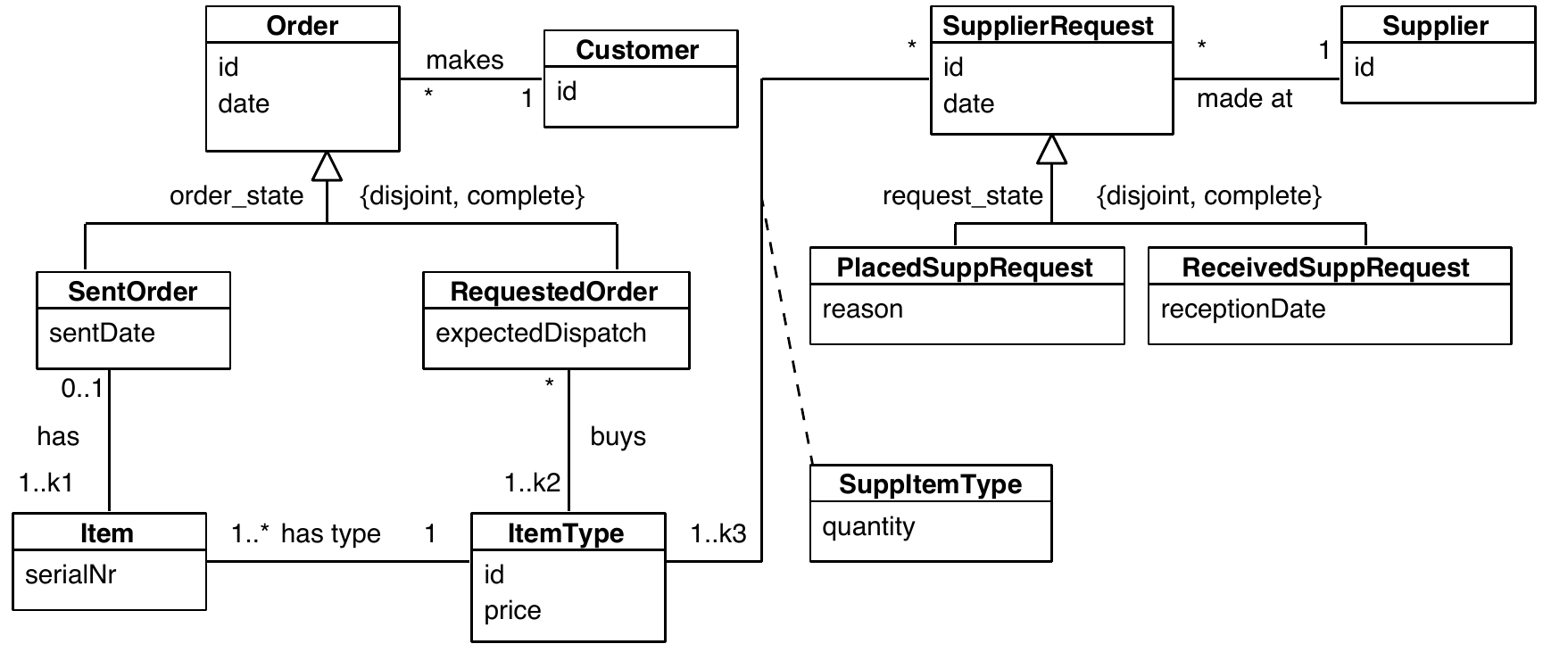}
\end{minipage}
\begin{minipage}[t]{\columnwidth}
\vspace{0pt}
\small
\scriptsize
Key constraints: \emph{serialNr} for \cl{Item}, \emph{id} for the other classes.
\end{minipage}
\caption{Class diagram for our example}
\label{fig:shared_objects}
\end{figure}

Logically, business artifacts, objects and associations to which they participate are created, updated, and destroyed by executing different services or tasks. However, we assume that some classes/associations are \emph{read-only}, i.e., their extension is not changed by the processes. This is the case, e.g., for \cl{ItemType} in our example.
%
%
Figure~\ref{fig:shared_objects} shows two business artifacts: \cl{Order} and \cl{SupplierRequest}. \cl{Order} has two substates: \cl{RequestedOrder} and \cl{SentOrder}, that track the order's evolution. A \cl{RequestedOrder} is related to various \cl{ItemType}s, indicating the products that the customer wishes to purchase.
On the other hand, \cl{SentOrder} is related to \cl{Item}s, which have a certain \cl{ItemType}. That is, \cl{SentOrder}s are directly related to specific items identified by their serial number. Notice that apart from the artifact itself, the associations \emph{makes}, \emph{has}, and \emph{buys} in which it takes part, are also created and deleted by the process.

Similarly, \cl{SupplierRequest} represents the requests made to the supplier. It has two possible substates: \cl{PlacedSuppRequest} and \cl{ReceivedSuppRequest}, and it is related to \cl{ItemType}, the association class that results from this relationship states information about the quantity of items of a certain type that have been requested to the supplier.

We call the artifact structure in this example \emph{shared instances by two artifacts} (\emph{shared instances} for short), because multiple artifacts (even of a different type) can be associated to the same object.
While an object might be related to an arbitrary number of artifact instances, the opposite does not hold, i.e., we naturally model an upper bound on the number of objects to which an artifact instance is related, so as to control the amount of information attached to the same artifact instance (e.g., consider the cardinalities of the \emph{buys} association in Figure~\ref{fig:shared_objects}).




\begin{figure}
\scriptsize \cl{Order}:\\
\hspace*{.05cm}
\includegraphics[height=.64cm]{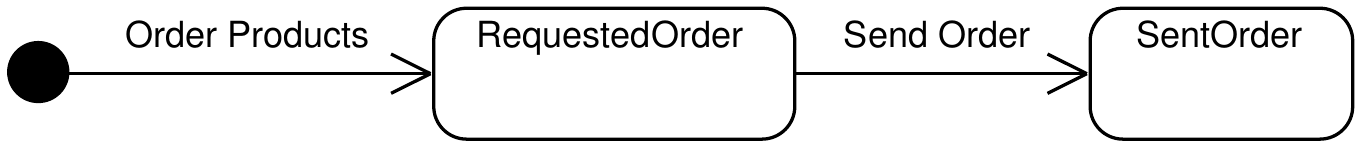}\\
\cl{SupplierRequest}:\\
\hspace*{.025cm}
\includegraphics[height=.64cm]{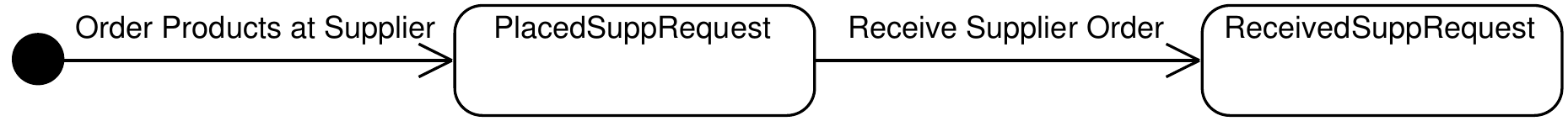}
\caption{Artifact state machines} 
\label{fig:shared_objects-smd-order}
\end{figure}


Both artifacts \cl{Order} and \cl{SupplierRequest} evolve independently from each other,
with a lifecycle specified by the state machines of Figure~\ref{fig:shared_objects-smd-order}.
Their meaning is very intuitive. In the case of \cl{Order}, when event \emph{Order Products} takes place, the \cl{RequestedOrder} is created. When we have a requested order and event \cl{Send Order} executes, the order is sent to the customer and the artifact changes its state to \cl{SentOrder}. The state machine diagram for \cl{SupplierRequest} is analogous to that of \cl{Order}.

Each of the events in the lifecycle transitions (\emph{Order Products}, \emph{Send Order}, \emph{Order Products at Supplier} and \emph{Receive Supplier Order}) are further defined using an activity diagram, which shows the units of work (i.e., the tasks) that are carried out, together with their execution order.

\begin{figure}
\includegraphics[width=\linewidth]{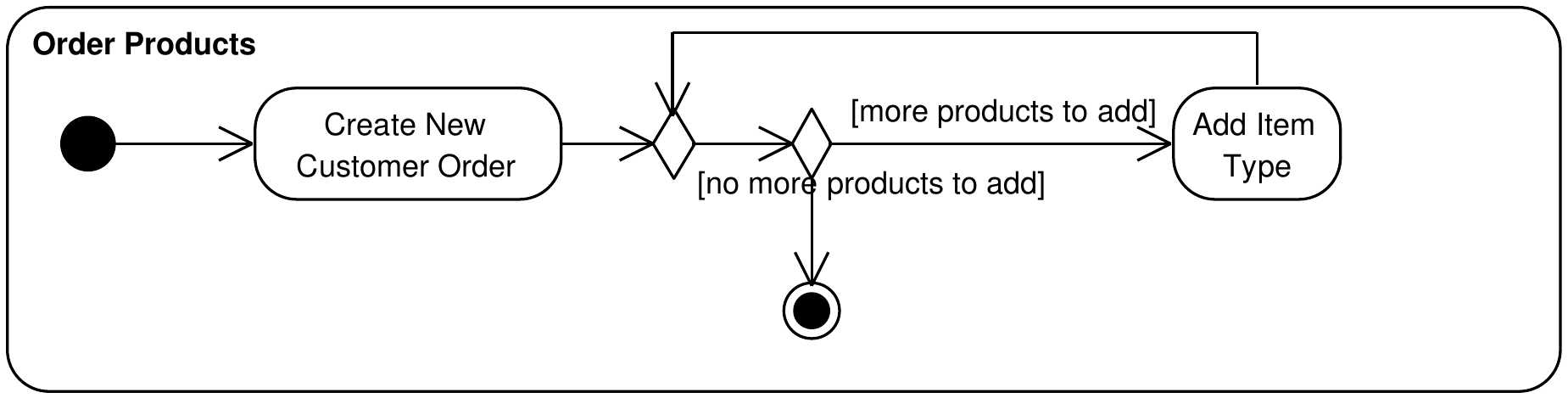}
\\
\includegraphics[width=0.52\linewidth]{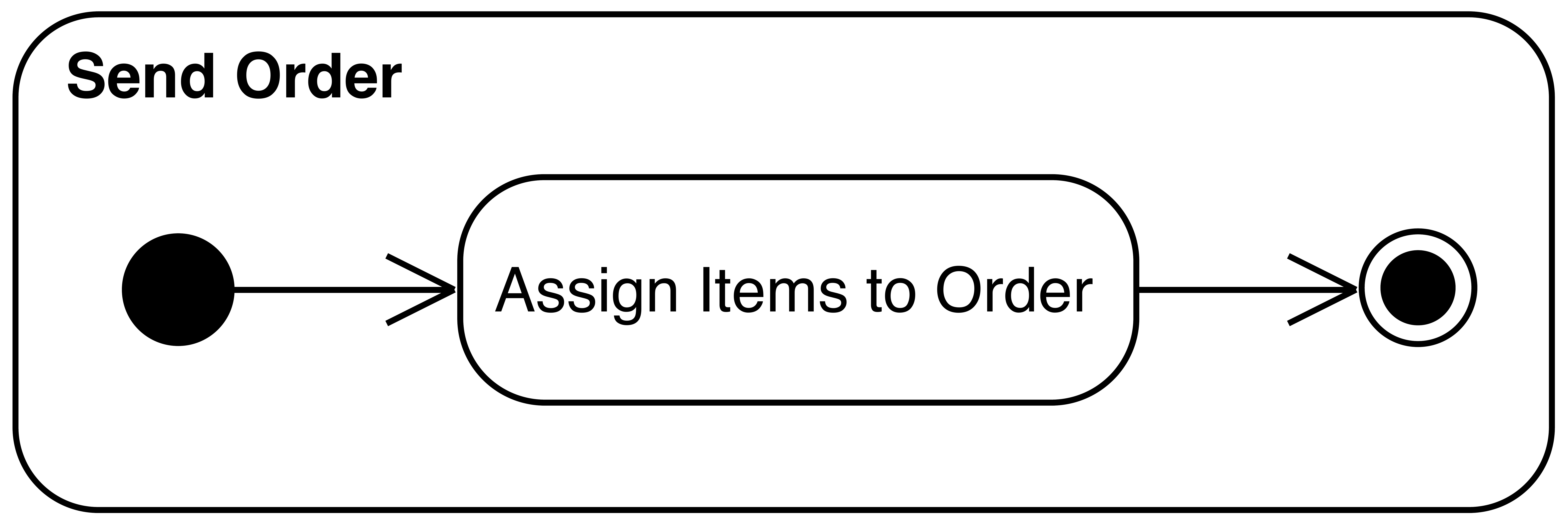}
\caption{Activity diagrams for the events of \cl{Order}}
\label{fig:shared_objects-ad-order_product}
\end{figure}

Figure~\ref{fig:shared_objects-ad-order_product} shows the activity diagrams for the events of \cl{Order}. As for the \emph{Order Products} event, the first task creates a new order, and the second task, which can be executed many times, adds an item type to the order that has been previously created.  As for the \emph{Send Order} event, the task adds the items to the order, marking it as sent.



Each activity diagram only gives an intuitive idea of what each task does. In order to specify tasks formally, we use OCL operation contracts, each of which has a precondition and a postcondition.
Below we show the OCL operation contracts for the tasks in Figure~\ref{fig:shared_objects-ad-order_product}.
\\[5pt]
$
\scriptsize
\begin{array}{@{}l@{}}
\textbf{action}~\op{CreateNewCustomerOrder}\\
~~(\mathit{orderId}:\cl{String}, \mathit{date}:\cl{Date}, \mathit{expDisp}:\cl{Date}):\cl{RequestedOrder}
\\
\textbf{pre:}~
\oclq{\neg \cl{RequestedOrder}.allInstances() \limp exists(ro | ro.id=orderId)}
\\
\textbf{post:}~
\oclq{\cl{RequestedOrder}.allInstances()} \\
\qquad
\begin{array}{@{}l@{}l@{}}
\oclq{\limp exists(ro |} & \oclq{ro.oclIsNew() \land ro.id=orderId \land ro.date=date}\\
&
\oclq{{}\land ro.expectedDispatch=expDisp \land result=ro)}
\end{array}
\end{array}
$\\


\noindent $\op{CreateNewCustomerOrder}$ receives as input the necessary parameters to create a new instance of the artifact \cl{RequestedOrder}. Its precondition makes sure that no other order with the same identifier exists. 
 It returns the \cl{RequestedOrder} that has been created with the input parameters.
This is an example of the kind of operation that is used to create new artifact instances, and that is typically associated to transitions leading to the initial artifact state (\emph{Order Products} in this example).\\[5pt]
$\scriptsize
\begin{array}{@{}l@{}}
\textbf{action}~\op{AddItemType}(\mathit{idItemType}:\cl{String}, \mathit{ro}:\cl{RequestedOrder})
\\
\textbf{pre:}~
\oclq{\neg ro.itemType.id \limp includes(idItemType)}
\\
\textbf{post:}~\oclq{ro.itemType.id \limp includes(id)}
\end{array}
$\\


\noindent $\op{AddItemType}$ adds an \cl{ItemType} to the order that has been created in the previous operation. Its precondition checks that the item type has not been already added to the order, and the postcondition creates the relationship between the given order and the right item type.

Notice that we assume that the artifact instance that is returned by the first operation, $\op{CreateNewCustomerOrder}$, is reused in the following operations. This assumption is necessary to ensure that we are always dealing with the same artifact instance.\\[5pt]
$\scriptsize
\begin{array}{@{}l@{}}
\textbf{action}~\op{AssignItemsToOrder}(\mathit{o}:\cl{RequestedOrder}, \mathit{date}:\cl{Date})
\\
\textbf{pre:}~
\begin{array}[t]{@{}l@{}l@{}}
\oclq{o.itemType \limp forAll(it |} & \oclq{it.item}\\
& \oclq{\limp exists(i | i.sentOrder\limp isEmpty()))}
\end{array}
\\
\textbf{post:}~
\begin{array}[t]{@{}l@{}}
\oclq{\neg o.oclIsTypeOf(\cl{RequestedOrder}) \land o.oclIsTypeOf(\cl{SentOrder}) \land{}}\\
\oclq{o.oclAsType(\cl{SentOrder}).sentDate=date\land{}}\\
\oclq{o.itemType\limp forAll(it | o.oclAsType(\cl{SentOrder}).item}\\
\oclq{\quad \limp includes(it.item@pre} \\
\oclq{\qquad\limp select(i | i.sentOrder \limp isEmpty()).asOrderedSet()}\\
\oclq{\quad\qquad \limp first()))}
\end{array}
\end{array}
$\\


\noindent $\op{AssignItemsToOrder}$ checks whether for the given \cl{RequestedOrder} $o$ it is the case
that there are available items (i.e., that have not been assigned to a \cl{SentOrder}) for each of the requested item types. If so, $o$ becomes a $\cl{SentOrder}$ that is associated to an available item for each of the requested item types.



These operation contracts show that the only elements that are created are the artifact itself and its relationships to other objects. Notice again that class \cl{ItemType}, which is shared by \cl{Order} and \cl{SupplierRequest}, is never modified by the tasks, and is in fact read-only. Moreover, all the actions that are not attached to the initial transition take as input an instance of the artifact type whose evolution is being modelled in the corresponding state machine, as required by our methodology. Notice that the navigation of all the OCL expressions in the pre and postconditions starts from the instance of the artifact flowing in the state transition diagram: an \cl{Order} (or one of its subclasses) in our example.

Given a \balsa model, it is interesting to check that it fulfills desired properties that ensure its correctness, such as the \emph{artifact termination} property: once an artifact instance is created, it should eventually evolve to a terminal state. This is addressed in the next section.

\section{Verification of \balsa Models}

The purpose of this section is to carefully analyze the interaction
between the dynamic and static component of \balsa models, so as to
single out the various sources of undecidability when it comes to
their verification. We show in particular that all the restrictions we
introduce towards decidability of
verification are in fact required: by
relaxing just one of them, verification becomes again undecidable.

\subsection{Verification Logic}
To specify temporal properties over \balsa models, we adopt the logic
$\mulpers$, a variant of first-order $\mu$-calculus that has been recently introduced to specify
requirements about the evolution of data-aware processes, jointly
considering the temporal dimension as well as the data maintained in
the different system states \cite{Hariri2013}. 
We recap here the main aspects of $\mulpers$ \cite{Hariri2013}, contextualizing it to
the case of \balsa models.

Given a \balsa model $\B$, the logic $\mulpers$ is defined as:
\begin{align*}
  \Phi ::=\ &
  Q ~\mid~
  \lnot \Phi ~\mid~
  \Phi_1 \land \Phi_2 ~\mid~
  \exists x. \inadom(x) \land \Phi ~\mid~ \\ &
  \inadom(\vec{x}) \land \DIAM{\Phi} ~\mid~
  \inadom(\vec{x}) \land\BOX{ \Phi} ~\mid~
  Z ~\mid~
  \mu Z.\Phi
\end{align*}
where $Q$ is a possibly open FO query, $Z$ is a
second order predicate variable,  and the following assumption holds:
in
$\inadom(\vec{x}) \land \DIAM{ \Phi}$ and $\inadom(\vec{x}) \land \BOX{\Phi}$,
the variables $\vec{x}$ are exactly
the free variables of $\Phi$, once we substitute to each
bounded predicate variable $Z$ in $\Phi$ its bounding formula $\mu
Z.\Phi'$ \cite{Hariri2013}.
This requirement expresses that $\mulpers$ quantifies
only over those objects/artifacts that persist in the system, i.e.,
continue to stay in the active domain of the system.

We make use of the following abbreviations:
\begin{compactitem}
\item $\Phi_1 \lor \Phi_2 = \neg (\neg\Phi_1 \land \neg \Phi_2)$,
\item $\BOX{\Phi} = \neg \DIAM{\neg \Phi}$,
\item $\nu Z. \Phi = \neg \mu Z. \neg \Phi[Z/\neg Z]$,
\item $\forall x.\cl{A}(x) \limp
 \Phi = \neg (\exists x.\cl{A}(x) \land \neg \Phi)$,
\item $\inadom(\vec{x}) \limp \DIAM{\Phi} = \neg
(\inadom(\vec{x}) \land \BOX{\neg \Phi})$,
\item $\inadom(\vec{x}) \limp \BOX{\Phi} = \neg
(\inadom(\vec{x}) \land \DIAM{\neg \Phi})$.
\end{compactitem}
The last two abbreviations show that $\mulpers$ allows one to
``control'' what happens when quantification ranges over a value that
disappears from the current active domain: in the $\limp$ case the
property trivializes to \emph{true}, in the $\land$ case it
trivializes to \emph{false}.


Among the properties of interest for \balsa models, we consider in
particular the fundamental requirement of \emph{artifact
  termination}. Intuitively, this property states that in all possible
evolutions of the system, whenever an artifact instance of a certain
type is present in the system, it must persist in the system until it
eventually reaches (in a finite amount of computation steps) a proper termination state.
Remember that such a state will have a counterpart in the UML
model of $\B$, which will contain a subclass for that specific
state. By denoting with $\tstate{\cl{A}} \in \subart{\cl{A}}$ the
proper termination state of artifact $\cl{A} \in \art{\B}$, and by
considering the standard FOL encoding of UML classes as unary
predicates, the
artifact termination property can be formalized in $\mulpers$ as
follows:
\[
\nu Z. \Big(\bigwedge_{\cl{A} \in \art{\B}} \hspace*{-.6cm}(\forall x. \cl{A}(x) \limp
\mu Y. \tstate{\cl{A}}(x) \lor (\cl{A}(x) \land \DIAM{Y}))\Big) \land \BOX{Z}
\]

In the following, all the undecidability results we give do not only
hold for the $\mulpers$ logic in general, but specifically for the
artifact termination property. Furthermore, we do only consider data
coming from a countably infinite unordered domain, and that can only
be compared for (in)equality. We thus avoid any
assumption on the structure of data domains, and consider only string
and boolean attributes\footnote{A boolean attribute can be considered
  as a special string attribute that can only be assigned to the
  special strings $\cval{true}$ or $\cval{false}$. This constraint can
be easily expressed in OCL.}.
In this light, our results witness that it is not
possible to achieve meaningful restrictions towards decidability just
by restricting the property specification logic, but that it is instead
necessary to suitably restrict the expressiveness of \balsa models themselves.

Since all the undecidability proofs rely on the encoding of 2-counter
machines \cite{Min67} into the specific class of \balsa
models under analysis, we start by briefly recapping 2-counter machines.

\subsection{2-Counter Machines}
We follow the original formulation in \cite{Min67}. A \emph{counter} is a memory
register that stores a non-negative integer. Given two positive integers $n, m \in \mathbb{N}^+$, an
\emph{$m$-counter machine} $\counter$ with counters $c_1,\ldots,c_m$
is a program with $n$ commands:
\[
1: \mathit{CMD}_1; \quad 2: \mathit{CMD}_2; \quad \ldots \quad n: \halt;
\]
where each $\mathit{CMD}_k$ (for \emph{index} $k \in \set{1,\ldots,n-1}$) is either
an increment command or a conditional decrement command.

Given $i \in \set{1,\ldots,m}$, an \emph{increment
   command} for counter $i$, written $\inc{i}$, is a command that
 increases the counter $c_i$ of one unit, and then
  jumps to the next instruction. Formally, for $k,k' \in \{1,\ldots,n-1\}$,
\[
k: \inc{i}{k'} \quad \text{ means } \quad k: c_i:=c_i+1;~\goto{k'};
\]

Given $i \in \set{1,\ldots,m}$, and $k,k',k''
  \in \set{1,\ldots,n}$, a \emph{conditional decrement instruction}
  for counter $i$ and instruction $k$, written $\cdec{i}{k'}{k''}$, tests
  whether the value of counter $i$ is zero. If so, it jumps to
  instruction $k'$; otherwise, it decreases counter $i$ of one unit,
  and then jumps to instruction $k''$. Formally, for $k,k',k'' \in \{1,\ldots,n-1\}$,
command $k: \cdec{i}{k'}{k''}$ means
\[k:
\textbf{if}~c_i = 0 ~\textbf{then}~\goto{k'};  \textbf{else}~\{ c_i:=c_i-1;~\goto{k''}; \}\\
\]
An \emph{input} for an $m$-counter machine is an $m$-tuple
$\tup{d_1,\ldots,d_m}$ of values in $\mathbb{N}$
initializing its counters. Given an $m$-counter machine $\counter$ and an
input $I$ of size $m$, we say that $\counter$ \emph{halts on input}
$I$ if the execution of $\counter$ with counter initial values set by
$I$ eventually reaches the last, $\halt$ command.

It is well-known that checking whether a 2-counter machine halts on a
given input is undecidable \cite{Min67}, and it is easy to strengthen
this result as follows:


\begin{corollary}
\label{cor:2cm}
It is undecidable to check whether a 2-counter machine halts on
input $\tup{0,0}$.
\end{corollary}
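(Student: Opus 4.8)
The plan is to reduce the general halting problem for 2-counter machines---undecidable by the result just cited---to the special case of the fixed input $\tup{0,0}$. Given an arbitrary 2-counter machine $\mathcal{C}$ with commands $1:\mathit{CMD}_1;\ \ldots;\ n:\halt$ and an arbitrary input $\tup{d_1,d_2}$, I would effectively construct a new machine $\mathcal{C}'$ that, when started on $\tup{0,0}$, first loads the desired initial values into its counters and only afterwards runs $\mathcal{C}$ verbatim. Since the construction is a purely syntactic, computable transformation, this suffices for a many-one reduction.

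Concretely, $\mathcal{C}'$ is obtained by prepending a block of $p = d_1+d_2$ increment commands: $d_1$ copies of an $\inc{1}{\cdot}$ command followed by $d_2$ copies of an $\inc{2}{\cdot}$ command, each jumping to the next command of the block, with the last one jumping to the (relabelled) entry point of $\mathcal{C}$. The commands of $\mathcal{C}$ are then appended with all of their indices---the source labels together with every $\goto$ target occurring inside $\inc$ and $\cdec$ commands---shifted by the constant offset $p$, so that the $\halt$ command now sits at position $n+p$.

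For correctness I would argue in two stages. First, the prefix block deterministically drives the configuration with counters $\tup{0,0}$ and control at command $1$ to the configuration with counters $\tup{d_1,d_2}$ and control at command $p+1$, i.e.\ the shifted entry point of $\mathcal{C}$. Second, from that configuration onward the run of $\mathcal{C}'$ proceeds in lockstep with the run of $\mathcal{C}$ on input $\tup{d_1,d_2}$: applying the uniform offset $p$ to all labels and jump targets makes the two transition relations isomorphic, so $\mathcal{C}'$ reaches its $\halt$ at $n+p$ exactly when $\mathcal{C}$ reaches its $\halt$ at $n$. Hence $\mathcal{C}'$ halts on $\tup{0,0}$ if and only if $\mathcal{C}$ halts on $\tup{d_1,d_2}$, and a decision procedure for halting on input $\tup{0,0}$ would decide halting on arbitrary inputs, contradicting the cited undecidability.

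I do not expect a genuine mathematical obstacle here; the argument is routine. The only point demanding care is the bookkeeping in the relabelling step: one must check that shifting every index by $p$ preserves the semantics of each command---in particular that the two branches of each conditional decrement still target the intended instructions---and that no label collision occurs between the prefix block and the shifted body. This is a standard index-arithmetic verification rather than a conceptual difficulty.
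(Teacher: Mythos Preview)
Your proposal is correct and follows essentially the same approach as the paper: prepend $d_1$ increments of counter~1 and $d_2$ increments of counter~2, then append the original program with all indices shifted by $d_1+d_2$, so that halting on $\tup{0,0}$ coincides with halting on $\tup{d_1,d_2}$. Your write-up is in fact a bit more explicit about the correctness argument (the lockstep isomorphism and the relabelling bookkeeping) than the paper's own proof, but the construction is identical.
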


 \begin{proof}
 Given a 2-counter machine $\counter$ with input $I = \tup{d_1,d_2}$,
 one can produce a new 2-counter machine $\counter'$ whose program is constituted by the
sequence of the following instruction sets:
\begin{compactenum}
\item a series of $d_1$ commands of type $k: \inc{1}{k+1}$;
\item a series of $d_2$ commands of type $k: \inc{2}{k+1}$;
\item the program of $\counter$, whose indexes are translated of
  $d_1+d_2$ units.
\end{compactenum}
It is easy to see that $\counter$ halts on input $\tup{d_1,d_2}$ if
and only if $\counter'$ halts on input $\tup{0,0}$.
\end{proof}
 In the following, we say that a 2-counter machine halts if it halts on
 input $\tup{0,0}$.

\subsection{Unrestricted Models}
We start by showing that, if we do not impose restrictions on the
shape of OCL queries used in
the pre-/post-conditions of tasks and in the decision points of a
\balsa model, then verification of artifact termination is undecidable. We say that a \balsa model is
\emph{unrestricted} if it does not impose any restriction on the shape of such queries.
\begin{theorem}
 \label{thm:undec-unrestricted}
Checking termination over unrestricted
\balsa models is undecidable.
\end{theorem}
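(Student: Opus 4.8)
The plan is to reduce the halting problem for 2-counter machines on input $\tup{0,0}$ --- undecidable by Corollary~\ref{cor:2cm} --- to the problem of checking artifact termination over unrestricted \balsa models. Given an arbitrary 2-counter machine $\counter$ with commands $1{:}\mathit{CMD}_1,\ldots,n{:}\halt$, I would construct an unrestricted model $\B_\counter$ whose essentially deterministic runs faithfully simulate the computation of $\counter$ on $\tup{0,0}$, arranging things so that the single artifact of $\B_\counter$ reaches its proper termination state exactly when $\counter$ halts. Undecidability of artifact termination then follows; and since artifact termination is a concrete \mulpers property, undecidability for the full logic is immediate.

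For the encoding I would use a single artifact \cl{Machine} whose subclasses in $\artcl{\B_\counter}$ mirror the program counter: one state per instruction index $k\in\set{1,\ldots,n}$, with $\tstate{\cl{Machine}}$ being the state associated to the $\halt$ command. The state machine $\chart{\cl{Machine}}$ encodes the control flow of $\counter$: an increment $k{:}\inc{i}{k'}$ yields a transition from state $k$ to state $k'$, whereas a conditional decrement $k{:}\cdec{i}{k'}{k''}$ yields two guarded transitions out of $k$, one to $k'$ guarded by ``counter $i$ is zero'' and one to $k''$ guarded by its being nonzero. The two counter values are represented not as integers --- recall that data is drawn from an unordered domain comparable only for equality --- but \emph{relationally}: I would attach to \cl{Machine} two associations $R_1,R_2$ to an auxiliary object class and maintain the invariant that the number of objects linked to the artifact via $R_i$ equals the current value of $c_i$.

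Each instruction is realized by the operation contracts of the tasks in the corresponding activity diagram, exploiting that an unrestricted model places no restriction on the shape of the OCL used. An increment of $c_i$ is a task whose postcondition creates a fresh object (via $\isnew$) and links it to the artifact through $R_i$; zero-testing is the guard $R_i\limp\isempty()$; the nonzero branch of a decrement additionally removes one $R_i$-linked object from the artifact in its postcondition, selecting an arbitrary witness exactly as done for $\op{AssignItemsToOrder}$ in the example. An initial transition creates the \cl{Machine} instance in the state for instruction $1$ with both $R_1$ and $R_2$ empty, modelling the input $\tup{0,0}$. A routine induction on the number of executed instructions then shows that the program-counter state and the $R_i$-cardinalities of the unique run of $\B_\counter$ track the instruction index and counter values of $\counter$ step by step; hence the artifact reaches $\tstate{\cl{Machine}}$ iff $\counter$ halts.

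The step demanding the most care --- and the main obstacle --- is verifying that the semantics of the artifact termination formula $\nu Z.(\ldots)\land\BOX{Z}$ really collapses to ``the unique computation of $\counter$ reaches $\halt$''. Because the simulation is deterministic up to the irrelevant choice of which fresh or removed object witnesses an increment or decrement, every reachable state lies on this one computation, so the inner least fixpoint $\mu Y.(\tstate{\cl{Machine}}(x) \lor (\cl{Machine}(x) \land \DIAM{Y}))$ --- demanding a \emph{finite} path to the terminal state --- holds at a reachable state iff that computation eventually halts, and the outer $\BOX{Z}$ propagates this to the whole run. One must also check that the objects encoding counter units $\inadom$-persist as required, so that the property does not spuriously trivialise via the $\limp$/$\land$ conventions of \mulpers. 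Granting this, artifact termination holds for $\B_\counter$ iff $\counter$ halts on $\tup{0,0}$, which is undecidable, and the theorem follows.
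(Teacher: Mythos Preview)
Your reduction is correct and follows the same high-level idea as the paper --- encode a 2-counter machine into a \balsa model so that artifact termination coincides with halting --- but your decomposition differs in two respects worth noting. First, the paper keeps the artifact's state machine trivial (just $\cl{Ready2CM}$ and $\cl{Halted2CM}$) and pushes the entire program of $\counter$ into the structure of a single activity diagram $\proc{run}$, whereas you spread the program counter across the artifact's substates, one per instruction. Second, and more interestingly, the paper models the counters as \emph{global} classes $\cl{Item}_1,\cl{Item}_2$ accessed via $\oclq{allInstances()}$ --- which is precisely the non-navigational idiom that ``unrestricted'' is meant to allow --- and therefore introduces an auxiliary $\cl{Flag}$ object whose sole purpose is to block the creation of a second artifact instance, so that the reduction really concerns a \emph{unique} run. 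Your encoding instead hangs the counter tokens off the artifact via associations $R_1,R_2$, making the counters instance-local; this is closer in spirit to the construction the paper uses later for Theorem~\ref{thm:undec-unbounded}. The upshot is that in your model distinct artifact instances evolve in complete isolation and all simulate the same deterministic computation of $\counter$ from $\tup{0,0}$, so no $\cl{Flag}$ trick is needed: either every instance halts or none does. Your exposition speaks of ``the unique run'' without justifying uniqueness, which is slightly off --- the right observation is not that there is one run, but that all artifact instances are bisimilar and hence agree on termination. Once that is said, the argument goes through.
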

\begin{proof}
By reduction from the halting problem of 2-counter machines, which is undecidable
(cf.~Corollary~\ref{cor:2cm}). Specifically, given a 2-counter machine
$\counter$, we produce a corresponding unrestricted \balsa model
$\B_\counter =
\tup{\uml^u,\emptyset,\set{\chart{\cl{2CM}}^u},\set{\proc{init}^u,\proc{run}^u}}$,
whose components are illustrated in Table~\ref{tab:2cm-unrestricted}. The idea behind the reduction is as
follows. $\uml$ contains a single artifact $\cl{2CM}$, which can be
ready or halted, the latter being the termination state
($\tstate{\cl{2CM}} = \cl{Halted2CM}$), as it can be clearly seen in
$\chart{\cl{2CM}}^u$.
As specified in diagram $\chart{\cl{2CM}}$, the $\op{init}$ operation
is activated only if the extension of  $\cl{Flag}$ is empty. In this
case, a new artifact instance of
  $\cl{Ready2CM}$ and a new object of type $\cl{Flag}$ are
  simultaneously created. The creation of a $\cl{Flag}$ object has the effect of blocking the possibility of creating new instances of
  $\cl{Ready2CM}$, in turn ensuring that only a single instance of
  $\cl{Ready2CM}$ will be created, and that only one execution of
  $\proc{run}$ will run. In fact, the only instance of
  $\cl{2CM}$ that enters  $\chart{\cl{2CM}}$ will move to the
  \emph{halted} state by executing the activity diagram
$\proc{run}$. In turn, $\proc{run}$ encodes the
program of $\counter$, by combining the process fragments obtained by
translating the single commands in $\counter$ as specified in
Table~\ref{tab:2cm-unrestricted}. Two classes
$\cl{Item}_1$ and $\cl{Item}_2$ are used to mirror the two
counters. In particular, at a given moment in time, the number of
instances of $\cl{Item}_i$ represents
the value of counter $i$. In this light:
\begin{inparaenum}[\it (i)]
\item incrementing counter $i$ translates into the creation of a new
instance of $\cl{Item}_i$;
\item testing whether counter $i$ is 0 translates into checking whether the extension of
  class $\cl{Item}_i$ is empty;
\item decrementing counter $i$ translates into the deletion of
  one of the current instances of $\cl{Item}_i$.
\end{inparaenum}
Table~\ref{tab:2cm-unrestricted} shows how these three aspects can be
formalized in terms of activity diagrams and OCL queries (focusing on
counter $1$). The diamond
gateways at the beginning of each fragment are used to properly merge
multiple incoming paths.

The claim follows by observing that $\counter$ halts if and only
if the unique instance of $\cl{2CM}$ that enters
$\chart{\cl{2CM}}$ also reaches the $\slabel{Halted2CM}$
state, i.e., properly terminates.
\begin{table*}[t]
\begin{tabularx}{\textwidth}{|c|@{}C|}
\hline
$\uml^u$
&
$\chart{\cl{2CM}}^u$
\\
\hline
\tableg{imgs/uml-unrestricted}{8cm}
&
\tableg{imgs/chart-unrestricted}{8.5cm}
\\
\hline
\end{tabularx}

\begin{footnotesize}
\begin{tabularx}{\textwidth}{|l@{\ }|p{3.6cm}@{\ }|X|}
 \hline
 $P_{init}^u$
&
\tableg{imgs/init-unrestricted}{3.8cm}
&
\vspace*{-.75cm}
$
\op{init}\
\begin{array}[t]{@{}l@{\ }l}
 \textbf{pre:}  & \oclq{\cl{Flag}.allInstances() \limp isEmpty()}\\
& \land \neg  (\cl{Ready2CM}.allInstances()\limp exists(m'|m'.id = id) ) \\
\textbf{post:} &
 \oclq{\cl{Flag}.allInstances()\limp exists(f|f.oclIsNew())}\\
&\oclq{\land \cl{Ready2CM}.allInstances()\limp exists(m|m.oclIsNew()
  \land m.id = id \land result
 =m)}
\end{array}
$
\\
\hline
\end{tabularx}
\end{footnotesize}

\begin{footnotesize}
\begin{tabularx}{\textwidth}{|l@{\ }|p{2.7cm}@{}|@{}p{3.9cm}|X|}
\hline
\multirow{4}{*}{$P_{run}^u$}
&
 start
&
\tableg{imgs/start-unrestricted}{1.6cm}
&
\\
\cline{2-4}
&
 $k: \inc{1}{k'}$
&
\tableg{imgs/inc-unrestricted}{4.1cm}
&
$
\op{inc_1} \
\begin{array}[t]{@{}l@{\ }l}
\textbf{pre:} & \oclq{\neg(\cl{Item1}.allInstances()\limp
  exists(i'|i'.id = id))}\\
\textbf{post:} & \oclq{\cl{Item1}.allInstances()\limp
  exists(i|i.oclIsNew() \land i.id = id)}
\end{array}
$
\\
\cline{2-4}
&
 $k: \cdec{1}{k'}{k''}$
&
\tableg{imgs/dec-unrestricted}{4.1cm}
&
\vspace*{-.5cm}
$
\begin{array}{@{}l}
Q_0^1 = \oclq{\cl{Item1}.allInstances() \limp isEmpty()}\\[6pt]
\op{dec_1}\
\begin{array}[t]{@{}l@{\ }l}
 \textbf{pre:}  & \oclq{\cl{Item1}.allInstances() \limp
   exists(i | i.id = id)}\\
\textbf{post:} & \oclq{\neg(\cl{Item1}.allInstances()  \limp exists(i | i.id = id))}
\end{array}
\end{array}
$
\\
\cline{2-4}
&
$n: \halt$
&
\tableg{imgs/halt-unrestricted}{4cm}
&
$
\op{halt} \
\begin{array}[t]{@{}l@{\ }l}
  \textbf{post:} & \oclq{\neg(m.oclIsTypeOf(\cl{Ready2CM}))}\\
&\oclq{\land m.oclIsTypeOf(\cl{Halted2CM})}
\end{array}
$
\\
\hline
\end{tabularx}
\end{footnotesize}
\caption{Unrestricted \balsa model simulating a
  2-counter machine}
\label{tab:2cm-unrestricted}
\end{table*}
\end{proof}

\subsection{Navigational and Unidirectional Models}
\label{sec:nav}
The proof of Theorem~\ref{thm:undec-unrestricted} relies on the
fact that artifact instances freely manipulate (i.e., create, read,
delete) instances of other
classes. Towards decidability, we have therefore to properly control
how artifact instances relate to other objects. In this light, we
suitably restrict OCL expressions, by allowing only
so-called navigational expressions.

To define navigational queries over a \balsa model  $\B = \tup{\uml,\ocl,\chartset,\procset}$, we start by
partitioning the associations and classes in $\uml$ into two sets: a read-only
set $\ro{\uml}$, and a read-write set $\rw{\uml}$. Intuitively,
$\ro{\uml}$ represents the portion of $\uml$ whose data are only
accessed, but never updated, by the execution of tasks, whereas
$\rw{\uml}$ represents the portion of $\uml$ that can be freely
manipulated by the tasks. These two sets can either be directly
specified by the modeler, or easily extracted by inspecting all
postconditions of operations present in $\procset$, marking a class
$\cl{C}$ as
read-write every time a sub-expression $\oclq{obj.oclIsNew()}$ appears
in some operation, and $obj$ is an instance of $\cl{C}$.
In this light, all artifacts presents in
$\uml$ are always part of the read-write set: $\art{\uml} \subseteq
\rw{\uml}$.

Given an object $obj$, an OCL expression is \emph{navigational from
  $obj$} if it is defined by means of the usual OCL operations like
exists, select, \ldots, but in which each subexpression is a boolean combination of expressions $Q_i$ that obey to
one of the following two types:
\begin{compactitem}
\item $Q_i$ only uses role and class names from $\ro{\uml}$;
\item $Q_i$ has the form of a path $o.r_1\cdots r_n$, which starts
  from $o$ and navigates through roles $r_1$ to $r_n$, where each
  $r_i$ is either a role or an attribute, and where $o$ is either the
  original object $obj$, or a variable used in the current operation.
\end{compactitem}
A \balsa model $\B = \tup{\uml,\ocl,\chartset,\procset}$
is \emph{navigational} if:
\begin{compactitem}
\item For every operation in $\procset$, with the exception of the
  $\op{init}$ operation, the OCL expressions used in its pre- and
  post-conditions are navigational from $a$, where $a$ is (the name of)
  the artifact instance taken in input by the operation.
\item Every condition in $\conds{\B}$ is an OCL expression that is
  navigational from (the name of) the artifact instance present in the
  scope of the condition.
\end{compactitem}
Navigational \balsa models do not
allow artifact instances to \emph{share} objects from read-write classes. Indeed, for an
artifact instance to establish a relation with an object of class
$\cl{C}$ previously created by another artifact instance, it is
necessary to write an OCL query that selects objects of type $\cl{C}$,
but this query is not navigational.

In spite of this observation, we will see that restricting \balsa models to navigational queries is
still not sufficient, but additional
requirements are needed towards decidability. The first requirement is
related to the way OCL expressions navigate the roles in
$\uml$. Given a navigational \balsa model $\B =
\tup{\uml,\ocl,\chartset,\procset}$, and given a role $r$ in $\uml$,
if there exists an OCL
expression in $\B$ that mentions $r$, then we say that $r$ is a
\emph{target role}, written $\target{r}{\B}$, otherwise we say that $r$ is a \emph{source
  role}, written $\source{r}{\B}$. We use this notion to define the notion of dependency between
two classes. Given classes $\cl{C}_1$ and $\cl{C}_{n+1}$ in $\uml$, we say
that $\cl{C}_{n+1}$ \emph{depends on} $\cl{C}_1$ if there exists a tuple
$\tup{A_1,\ldots,A_n}$ of binary associations such that each $A_i$
connects $\cl{C}_i$ and $\cl{C}_{i+1}$, and the role of $A_i$ attached
to $\cl{C}_{i+1}$ is a target role.
We then say that $\B$ is \emph{bidirectional} if it is navigational
and there exists a class in $\rw{\uml}$ that depends on itself or on
one of its super/sub-classes,
\emph{unidirectional} if it is navigational and there is no class in
$\rw{\uml}$ that depends on itself or on one of its super/sub-classes. Intuitively, for a unidirectional \balsa model
it is possible to mark each association in its UML
model as directed (since no association can have both nodes as targets), and the
resulting directed graph is acyclic. See for example
Table~\ref{tab:2cm-unbounded}. This property, in turn, can be tested
in $\textsc{NLogSpace}$.

We now correspondingly characterize navigation in
$\mulpers$. Without loss of generality, we consider only
binary relations\footnote{Non-binary relations can be removed
through reification.}. A \emph{pseudo-navigational} $\mulpers$ property has the form
\[
\begin{array}{@{}l@{}l@{}}
  \Phi ::=\ & \true \mid \false \mid A(x) \mid \lnot A(x) \mid \Phi_1 \land
  \Phi_2 \mid \Phi_1 \lor \Phi_2 \mid \\
&  Z \mid \mu   Z.\Phi \mid \nu Z.\Phi \mid\\
& \exists x. A(x) \land \Phi(x) \mid \forall x. A(x) \limp \Phi(x)
\mid \\
& \exists y. R(x,y) \land \Phi(y) \mid \forall y. R(x,y) \limp \Phi(y)
\mid \\
& \exists y. R(y,x) \land \Phi(y) \mid \forall y. R(y,x) \limp \Phi(y)
\mid \\
& A(x) \land \DIAM{\Phi}  \mid  A(x) \land\BOX{ \Phi} \mid
A(x) \limp \DIAM{\Phi}  \mid  A(x) \limp \BOX{ \Phi}
\end{array}
\]
where, in the last row, variable $x$ is exactly
the single free variable of $\Phi$, once we substitute to each
bounded predicate variable $Z$ in $\Phi$ its bounding formula $\mu
Z.\Phi'$ (resp., $\nu Z.\Phi'$). Notice that
pseudo-navigational properties are in negation normal form, and that
they constitute indeed a fragment of $\mulpers$. In fact, even if they
do not make use of $\inadom$, they always guard quantification and
next-state transitions with classes and/or relations, which imply the
corresponding quantified objects to be in the current active domain.

Given a unidirectional \balsa model $\B =
\tup{\uml,\ocl,\chartset,\procset}$, we characterize the fact
that a closed, pseudo-navigational $\mulpers$
property $\Phi$ is \emph{navigationally compatible} with
$\B$ as:
\begin{compactitem}
\item $\Phi$ contains a subformula of the form  $\exists x. A(x) \land
  \Psi(x)$ or $\forall x. A(x) \limp \Psi(x)$.
\item The largest subformula of $\Phi$ of the form
  $\exists x. A(x) \land \Psi(x)$ or $\forall x. A(x) \limp \Psi(x)$
  is such that:
\begin{inparablank}
\item $A \in \artcl{\B}$, and
\item $A$ and $x$ are
  compatible with $\Psi$, written $\comp{A}{x}{\Psi} = \true$,
  according to the notion of compatibility defined below.
\end{inparablank}
Given a class  $C$ in $\uml$, a variable $x$, and a
pseudo-navigational
  open $\mulpers$ property $\Phi(x)$, we define $\comp{C}{x}{\Phi}$ as:
\begin{compactenum}[(1)]
\item $\true$  if  $\Phi \in \set{\true, \false, Z}$
\item $C \ISA_\uml A \lor A \ISA_\uml C$ if $\Phi \in \set{A(x), \neg
  A(x)}$
\item $\comp{C}{x}{\Phi_1} \land \comp{C}{x}{\Phi_2}$ if $\Phi \in \set{\Phi_1 \land \Phi_2,\Phi_1 \lor \Phi_2}$
\item $\comp{C}{x}{\Psi}$ if $\Phi \in \set{\mu Z. \Psi, \nu
  Z. \Psi}$
\item $\false$ if $\Phi \in \set{\exists y. A(y) \land \Psi(y),\forall
    y. A(y) \limp \Psi(y)}$
\item $\target{\imr{R}}{\B} \land \comp{C'}{y}{\Psi} \land ((C
  \ISA_\uml \exists R) \lor (\exists R \ISA_\uml C))$\\ if
$\Phi \in \set{\exists y. R(x,y) \land \Psi(y), \forall y. R(x,y) \limp
 \Psi(y)}$\\ and $C' =_\uml \exists R^-$
\item $ \target{\domr{R}}{\B} \land \comp{C'}{y}{\Psi} \land ((C
  \ISA_\uml \exists R^-) \lor (\exists R^- \ISA_\uml C)) $
\\
if $\Phi \in \set{\exists y. R(y,x) \land \Psi(y), \forall y. R(y,x) \limp
 \Psi(y)} $ \\and $C' =_\uml \exists R$
\item $(C \ISA_\uml A \lor A \ISA_\uml C) \land \comp{C}{x}{\Psi} $\\ if $\Phi \in \{ A(x)
 {\land} \DIAM{\Psi}, A(x)
 {\land}\BOX{ \Psi}, A(x)
 {\limp}\DIAM{ \Psi}, A(x)
 {\limp}\BOX{ \Psi}\}$
\end{compactenum}

\end{compactitem}

Intuitively, the formulae above state that:
(1) $C$ and $x$ are always compatible with
non-first-order subformulae.
(2) $C$ and $x$ are
compatible with first-order components of the form $A(x)$ or $\neg
A(x)$ if classes $A$ and $C$ belong to the same hierarchy
according to $\uml$; this means that navigation through classes is only
allowed in the context of the same hierarchy.
(3) boolean connectives distribute the compatibility
  check to all their inner sub-formulae.
(4) fixpoint constructs push the compatibility check
  to their inner sub-formulae.
(5) compatibility is broken if new quantified
  variables over classes are introduced in the formula. This means
  that at most one quantification over classes is allowed in a
  pseudo-navigational property to be navigationally compatible with $\B$.
(6)~and (7)~deal with navigation along a binary relation, from
  the first to the second component in (6), and from the second to the
  first component in (7). In particular, (6)~states that the formula
  can quantify over the second component of a relation $R$ where $x$
  points to the first component if:
\begin{inparaenum}[\it (i)]
\item the second component of $R$ is a target role in $\B$,
  witnessing that $\Phi$ agrees with the unidirectional navigation
  imposed by $\B$ over $R$;
\item class $C$ belongs to the same hierarchy of the domain class for
  $R$, according to $\uml$;
\item $C'$ and $y$ are navigationally compatible with the inner
  formula $\Psi$, where $y$ is the newly quantified variable, and $C'$
  is the image class for $R$ according to $\uml$.
\end{inparaenum}
(7) works in a similar way, by simply inverting the second and first
components of $R$.
(8) next-state transition formulae are compatible if
  the class used in the guard belongs to the same hierarchy of $C$,
  and $C$ and $x$ are compatible with the inner subformula.

Notice that termination properties are always guaranteed to be
navigationally compatible with the corresponding \balsa model, since
$\cl{A}$ and $\tstate{\cl{A}}$ belong by definition to the same hierarchy.


Unfortunately, the following result shows that restricting \balsa
models to be unidirectional is
not sufficient to obtain decidability of checking termination properties.

\begin{theorem}
\label{thm:undec-unbounded}
Checking termination of unidirectional \balsa
models is undecidable.
\end{theorem}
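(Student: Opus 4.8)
The plan is to reduce again from the halting problem of 2-counter machines (Corollary~\ref{cor:2cm}), reusing the overall architecture of the proof of Theorem~\ref{thm:undec-unrestricted} but reorganizing the encoding of the counters so that every OCL query becomes navigational from the unique artifact instance, and so that the induced association graph is acyclic. I would keep a single artifact $\cl{2CM}$ with states $\cl{Ready2CM}$ and $\cl{Halted2CM}$ (with $\tstate{\cl{2CM}} = \cl{Halted2CM}$), the $\cl{Flag}$ trick in $\op{init}$ that forces exactly one artifact instance $m$ to enter $\chart{\cl{2CM}}$, and the translation of the program of $\counter$ into the activity diagram $\proc{run}$.

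The essential difference lies in how the two counters are represented. Instead of counting the global extension of classes $\cl{Item}_1$ and $\cl{Item}_2$ through $\oclq{allInstances()}$---which is exactly the non-navigational construct forbidden by the definition of navigational models---I would attach the item objects directly to $m$ through two dedicated read-write associations, with target roles $c_1$ and $c_2$ on the $\cl{Item}_i$ side. The value of counter $i$ is then the number of $\cl{Item}_i$ objects reachable from $m$ via $m.c_i$. With this encoding the three counter operations become navigational from $m$: incrementing counter $i$ creates a fresh $\cl{Item}_i$ object linked to $m$ (postcondition of the form $\oclq{m.c_i \limp exists(i | i.oclIsNew())}$); testing whether counter $i$ is zero is the navigational guard $\oclq{m.c_i \limp isEmpty()}$; and decrementing counter $i$ removes one element from $m.c_i$. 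All of these are paths rooted at the artifact instance flowing through $\chart{\cl{2CM}}$, so the resulting model $\B_\counter$ is navigational. The precise activity diagrams and operation contracts mirror Table~\ref{tab:2cm-unrestricted} and would be collected in Table~\ref{tab:2cm-unbounded}.

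The step I expect to be the crux is verifying unidirectionality, i.e., that no class in $\rw{\uml}$ depends on itself or on a super/subclass. The only non-read-only classes are $\cl{2CM}$ (together with its states) and the item classes $\cl{Item}_1,\cl{Item}_2$. The only target roles are $c_1$ and $c_2$, both on the $\cl{Item}_i$ side; the roles attached to $\cl{2CM}$ are source roles, since no OCL expression ever navigates from an item back to the artifact. Hence in the dependency relation $\cl{2CM}$ depends on $\cl{Item}_1$ and $\cl{Item}_2$, while neither item class depends on anything, so no read-write class depends on itself and the directed association graph is acyclic: $\B_\counter$ is unidirectional. The subtle point here is that although navigational models generally forbid sharing of read-write objects, the $\cl{Flag}$ mechanism guarantees a single instance $m$, so the navigational fragment $m.c_i$ recovers the \emph{entire} representation of counter $i$ and the local (navigational) and global views coincide.

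Finally, exactly as in Theorem~\ref{thm:undec-unrestricted}, the run of $\proc{run}$ faithfully simulates $\counter$, and $\counter$ halts if and only if $m$ reaches $\cl{Halted2CM}$, i.e., $\B_\counter$ satisfies artifact termination. Since checking halting on input $\tup{0,0}$ is undecidable, checking termination over unidirectional \balsa models is undecidable as well.
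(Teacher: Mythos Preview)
Your proposal is correct and follows the same overall strategy as the paper: reduce from 2-counter machines, encode each counter as a collection of item objects reachable by navigation from the artifact instance, and check that all OCL expressions outside $\op{init}$ are rooted at $m$ and traverse associations in a single direction. Two differences from the paper's proof are worth noting.

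First, the paper drops the $\cl{Flag}$ mechanism entirely. Rather than forcing a single artifact instance, it lets arbitrarily many instances of $\cl{Ready2CM}$ be created, observes that in a navigational model no two instances can share read-write objects, and concludes that each instance simulates $\counter$ in complete isolation---so either all halt or none do, and termination holds iff $\counter$ halts. Your $\cl{Flag}$ variant is also legitimate (the definition explicitly exempts $\op{init}$ from the navigational restriction), but the isolation argument is the insight the section is really setting up, and it is reused in the later undecidability proofs. Second, the paper interposes a $\cl{Counter}$ class between $\cl{2CM}$ and $\cl{Item}$, navigating $m.c1.items$ / $m.c2.items$; your direct attachment $m.c_i$ is a harmless simplification.

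One small slip: you have the direction of ``depends on'' inverted. Since the target roles $c_1,c_2$ sit on the $\cl{Item}_i$ side, the paper's definition makes $\cl{Item}_i$ depend on $\cl{2CM}$, not the converse. Your acyclicity conclusion is unaffected.
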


\begin{proof}
Given a 2-counter machine
$\counter$, we produce a corresponding unidirectional \balsa model
$\B_\counter =
\tup{\uml^*,\emptyset,\set{\chart{\cl{2CM}}^*},\set{\proc{init}^*,\proc{run}^*}}$,
whose components are illustrated in Table~\ref{tab:2cm-unbounded}.
$\uml^*$ contains a single artifact $\cl{2CM}$, which can be
ready or halted, the latter being the termination state
($\tstate{\cl{2CM}} = \cl{Halted2CM}$), as attested by
$\chart{\cl{2CM}}^*$.
When the $\op{init}$ operation is applied, a new instance $m$ of
$\cl{Ready2CM}$ is created, attaching to it two dedicated objects of
type $\cl{Counter}$, using respectively role $c1$ and $c2$ of the
associations $hasC1$ and $hasC2$. Such  $\cl{Counter}$ objects mirror the two counters of
$\C$. In particular, each of the two $\cl{Counter}$ objects attached to $m$
has a 1-to-many association with $\cl{Item}$: at a given time,
the number of items attached to $m.c1$ ($m.c2$ resp.) represents the
value of the first (second resp.) counter in $\C$.

The artifact instance $m$ then  executes the process corresponding to
the $\op{run}$ event, which suitably encodes the program of $\C$:
\begin{inparaenum}[\it (i)]
\item incrementing the first counter translates into the inclusion of
  a new $\cl{Item}$ to the items of $m.c1$, i.e., to the set $m.c1.items$;
\item testing whether the first counter is 0 translates into checking
  whether set $m.c1.items$ is empty;
\item decrementing the first counter translates into the removal of
  one item from set $m.c1.items$ (it is not important which).
\end{inparaenum}
Table~\ref{tab:2cm-unbounded} shows how these three aspects can be
formalized in terms of activity diagrams and OCL queries
The management of the second
counter is analogous, with the only difference that it involves
$m.c2.items$ in place of $m.c1.items$. Figure~\ref{fig:unbounded-calc}
intuitively shows the evolution of a specific configuration of the
system in response to the application of two operations.

Observe that, as graphically depicted in $\uml^*$ (consistently with
the operations), $\B_\C$ is
unidirectional: all OCL expressions (except from that in $\op{init}$)
are navigational in $m$, and navigation unidirectionally flows from
$\cl{2CM}$ to $\cl{Counter}$ to $\cl{Item}$. Furthermore,
no two objects of type $\cl{Counter}$, nor two objects of type
$\cl{Item}$, are shared by different instances of $\cl{2CM}$. This
means that every instance of $\cl{Ready2CM}$ runs the process corresponding
to the program of $\C$ in total isolation with other
instances of $\cl{Ready2CM})$ and, consequently, either all halt or
none halt.
The claim follows by observing that $\C$ halts if and only
if all instances of $\cl{Ready2CM}$ eventually reaches the $\slabel{Halted2CM}$
state, i.e., properly terminate.
\begin{table*}[t!]
\begin{tabularx}{\textwidth}{|c|@{}C|}
\hline
$\uml^*$
&
$\chart{\cl{2CM}}^*$
\\
\hline
\tableg{imgs/uml-unbounded}{9.5cm}
&
\tableg{imgs/chart-unbounded}{7.5cm}
\\
\hline
\end{tabularx}

\begin{footnotesize}
\begin{tabularx}{\textwidth}{|l@{\ }|p{3.6cm}@{\ }|X|}
\hline
 $P_{init}^*$ &
\tableg{imgs/init-unbounded}{3.6cm}
&
\vspace*{-.85cm}
$
\begin{array}[t]{@{}l@{\ }l}
 \textbf{pre:}  &\neg  (\cl{Ready2CM}.allInstances()\limp exists(m'|m'.id = id) ) \\
\textbf{post:} &
\oclq{ \cl{Ready2CM}.allInstances()\limp exists(m|
 \begin{array}[t]{@{}l}
m.oclIsNew() \land m.id = id \land result
 =m\\
{}\land (m.c1\limp exists(c_1|
c_1.oclIsNew()))\\
{}\land (m.c2\limp exists(c_2|c_2.oclIsNew()))~)\\
 \end{array}
} \\
\end{array}
$
 \\
\hline
\end{tabularx}
\end{footnotesize}

\begin{footnotesize}
\begin{tabularx}{\textwidth}{|l@{\ }|p{2.7cm}@{}|@{}p{4.5cm}|X|}
\hline
\multirow{4}{*}{$P_{run}^*$}
&
 start
&
\tableg{imgs/start-unbounded}{1.6cm}
&
\\
\cline{2-4}
&
 $k: \inc{1}{k'}$
&
\tableg{imgs/inc-unbounded}{4.4cm}
&
$
\op{inc_1}\
\begin{array}[t]{@{}l@{\ }l}
\textbf{pre:} & \oclq{\neg(m.c1.items\limp
  exists(i'|i'.id = id))}\\
\textbf{post:} & \oclq{m.c1.items\limp
  exists(i|i.oclIsNew() \land i.id = id)}
\end{array}
$
\\
\cline{2-4}
&
 $k: \cdec{1}{k'}{k''}$
&
\tableg{imgs/dec-unbounded}{4.8cm}
&
\vspace*{-.5cm}
$
\begin{array}{@{}l}
Q_0^1 = \oclq{m.c1.items \limp isEmpty()}\\[6pt]
\op{dec_1}\
\begin{array}[t]{@{}l@{\ }l}
 \textbf{pre:}  & \oclq{m.c1.items \limp
   exists(i | i.id = id)}\\
\textbf{post:} &\oclq{\neg(m.c1.items \limp
   exists(i | i.id = id))}
\end{array}
\end{array}
$
\\
\cline{2-4}
&
$n: \halt$
&
\tableg{imgs/halt-unbounded}{4cm}
&
$
\op{halt} \
\begin{array}[t]{@{}l@{\ }l}
\textbf{post:} & \oclq{\neg(m.oclIsTypeOf(\cl{Ready2CM}))}\\
& {}\land \oclq{m.oclIsTypeOf(\cl{Halted2CM})}
\end{array}
$
\\
\hline
\end{tabularx}
\end{footnotesize}
\caption{Unidirectional \balsa model simulating a
  2-counter machine}
\label{tab:2cm-unbounded}
\end{table*}
\begin{figure*}[tb]
\centering
\includegraphics[width=\textwidth]{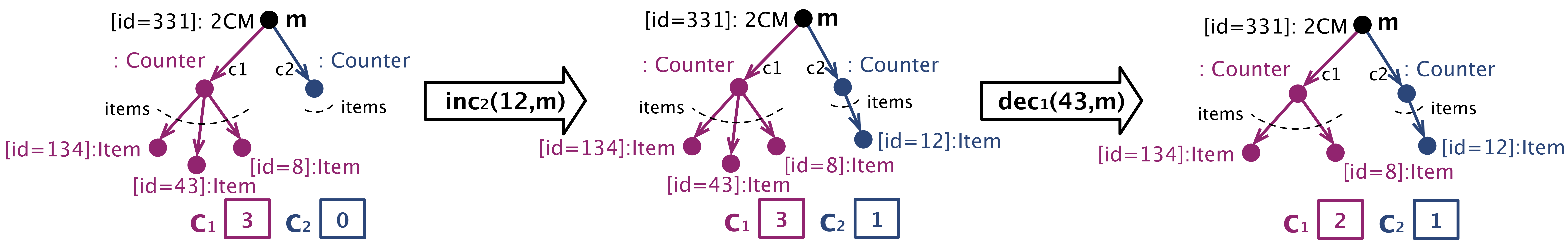}
\vspace*{-.4cm}
\caption{Sample counter manipulation using the \balsa model in Table~\ref{tab:2cm-unbounded}}
\label{fig:unbounded-calc}
\end{figure*}
\end{proof}


\subsection{ Cardinality-Bounded Models}
The source of undecidability in Theorem~\ref{thm:undec-unbounded}
relies in the \emph{contains} relation of $\uml^*$
(cf.~Table~\ref{tab:2cm-unbounded}), which relates its target role
\emph{items} with an unbounded cardinality. To overcome this issue, we
introduce the notion of cardinality-bounded \balsa model. A
\balsa model $\B = \tup{\uml,\ocl,\chartset,\procset}$ is \emph{cardinality-bounded} if $\B$ is navigational and each
target role in $\uml$ has a bounded cardinality, i.e., is associated
to a cardinality constraint whose upper bound is numeric.
$\B$ is \emph{N-cardinality-bounded} if the maximum upper bound
associated to a target role is $N$.
If there exists at least a target role with unbounded cardinality, i.e.,
associated to a cardinality constraint whose upper bound is $*$, then
$\B$ is instead said to be \emph{cardinality-unbounded}. Notice that
no cardinality restriction is imposed, for cardinality-bounded models,
on the cardinalities associated to roles that are not target roles.

With all these notions at hand, we are now able to state the main
result of this paper.

 \newenvironment{proofsk}{\noindent\emph{Proof (sketch).\ }}{\qed}

\begin{theorem}
\label{thm:dec}
Let $\B$ be an arbitrary unidirectional,
cardinality-bounded \balsa model. Verifying whether $\B$
satisfies a $\mulpers$
property navigationally compatible with $\B$ is decidable, and reducible to
finite-state model checking.
\end{theorem}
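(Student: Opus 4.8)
The plan is to reduce verification to finite-state model checking by building, from $\B$ and the navigationally compatible property $\Phi$, a \emph{finite faithful abstraction} of the (generally infinite-state) transition system $\Upsilon_\B$ induced by the operational semantics of $\B$, and then to invoke the persistence-preserving abstraction methodology for data-aware dynamic systems of~\cite{Hariri2013}: it suffices to exhibit a finite transition system $\Theta$ together with a bisimulation that preserves $\mulpers$ truth, since $\mulpers$ model checking over finite systems is decidable. First I would make $\Upsilon_\B$ precise: its states are information bases (instances over $\uml$) reachable from the initial empty instance by repeatedly firing activity-diagram runs whose atomic tasks satisfy their OCL operation contracts and whose state-machine guards in $\chartset$ are met.

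The crux of the reduction rests on three structural facts. Because $\B$ is \emph{navigational}, every OCL expression (other than in $\op{init}$) navigates from the single artifact instance taken as input, and read-write objects are never shared across artifact instances (as observed just after the definition of navigational models); hence each artifact instance privately \emph{owns} the read-write objects reachable from it. Because $\B$ is \emph{unidirectional}, the dependency graph over $\rw{\uml}$ is acyclic, so navigation from an artifact has depth bounded by the number of classes in $\rw{\uml}$. Because $\B$ is \emph{cardinality-bounded}, every target role has a numeric upper bound, so navigation has bounded fan-out at each step. Combining these, there is a bound $b$, depending only on $\B$, such that the \emph{cluster} of read-write objects owned by any artifact instance has size at most $b$, while the read-only fragment $\ro{\uml}$ is fixed throughout every run and finite. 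This is exactly the bound that fails in Theorem~\ref{thm:undec-unbounded}, where the unbounded \emph{items} cardinality lets a single cluster encode an unbounded counter.

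Since data come from a countably infinite, unordered domain comparable only for equality, both the operation contracts and $\Phi$ are \emph{generic}, i.e., invariant under permutations of data values. I would therefore abstract each reachable state by the pair consisting of $(i)$~the fixed read-only instance and $(ii)$~a \emph{counting abstraction} of the artifact clusters: for each isomorphism type of a cluster of size $\le b$ --- of which there are only finitely many up to data renaming --- record its multiplicity truncated at a threshold $N_\Phi$ determined by $\Phi$. This yields a finite abstract state space $\Theta$. Abstract transitions mirror concrete ones, using genericity to rename the fresh values produced by $\op{init}$ and by object-creating tasks, and the truncation threshold to reproduce the appearance and disappearance of cluster types under $\DIAM{}$ and $\BOX{}$. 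Navigational compatibility is what makes the counting abstraction sound for $\Phi$: by clause~(5) of $\comp{\cdot}{\cdot}{\cdot}$ the property quantifies over at most one artifact class, and by clauses~(6)--(8) all further navigation stays within that single artifact's own hierarchy and cluster; thus every maximal class-guarded subformula of $\Phi$ is evaluated \emph{locally} on one cluster, which is precisely the information retained by the abstraction. I would then establish a persistence-preserving bisimulation between $\Upsilon_\B$ and $\Theta$ in the sense of~\cite{Hariri2013}, so that $\Upsilon_\B \models \Phi$ iff $\Theta \models \Phi$, and conclude by decidability of finite-state $\mulpers$ model checking.

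The hard part will be justifying the abstraction in the presence of an \emph{unbounded} number of artifact instances: $\op{init}$ may fire arbitrarily often with fresh identifiers, so $\Upsilon_\B$ is \emph{not} state-bounded and the naive pruning arguments for state-bounded systems do not apply directly. Decidability must instead come from \emph{symmetry}: isolated, identically behaving clusters are interchangeable, so only their thresholded type-multiplicities matter. Making this precise requires $(a)$~fixing $N_\Phi$ large enough that both modalities are faithfully reflected --- in particular that a transition advancing the last instance of a cluster type is matched in the abstraction, mirroring the ``either all halt or none halt'' phenomenon exploited in the proof of Theorem~\ref{thm:undec-unbounded}; $(b)$~checking that the $\inadom$-guarding built into $\mulpers$ and into pseudo-navigational properties correctly handles objects and artifacts that leave the active domain upon deletion; and $(c)$~verifying that navigational compatibility indeed forbids $\Phi$ from correlating two distinct clusters, which is what guarantees locality and hence soundness of a per-type counting abstraction. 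Establishing the back-and-forth bisimulation conditions against these requirements is the technical core of the argument.
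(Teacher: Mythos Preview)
Your structural analysis --- bounded per-artifact clusters via unidirectionality plus cardinality bounds, and isolation of read-write objects per artifact instance --- matches the paper's. The divergence is in how you exploit isolation. You keep the unbounded population of artifact instances and build a \emph{counting abstraction} over cluster isomorphism types, truncated at a property-dependent threshold $N_\Phi$, then aim for a direct persistence-preserving bisimulation. The paper instead uses isolation far more aggressively: since $\Phi$ is rooted in a single artifact class $\cl{S}$ and instances neither share read-write data nor observe one another, it argues that a sound and complete abstraction need only track the instances of $\parentart{\cl{S}}$ present in the \emph{initial} database plus \emph{one} additional, nondeterministically created instance of that type. This collapses the system to a genuinely \emph{state-bounded} one (at most $(b_i+1)\cdot(k\cdot N)^{l+1}$ objects), after which the paper simply invokes the translation to DCDSs of~\cite{Estanol2013b} and the state-bounded decidability result of~\cite{Hariri2013}.

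What each buys: the paper's ``one extra representative'' reduction sidesteps entirely the three hard parts you list --- no threshold calibration, no bespoke bisimulation proof, no per-type multiplicity bookkeeping --- and lands directly on an off-the-shelf result. Your route is heavier but is in principle workable and arguably more self-contained, since it does not rely on the external \balsa-to-DCDS encoding. The conceptual point you are missing is that isolation plus the single-class quantification enforced by clause~(5) of navigational compatibility means threshold~$1$ already suffices: any run with many instances is, from $\Phi$'s viewpoint, indistinguishable from a run with one freshly created instance interleaved with the fixed initial ones. Recognising this would let you short-circuit the counting machinery and recover the paper's argument.
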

\begin{proof}
Let  $\B = \tup{\uml,\ocl,\chartset,\procset}$ be a cardinality-bounded, unidirectional \balsa
model, and let $\Phi$ be a $\mulpers$ property navigationally
compatible with $\B$.
On the one hand, by inspecting the notion of navigational compatibility, one can notice
that $\Phi$ is ``rooted'' in a single artifact class $\cl{S}$, subject to the
outermost subformula of the form $\exists x. S(x) \land \Psi(x)$ (or
$\forall x. S(x) \limp \Psi(x)$). Navigational compatibility then
ensures that $\Phi$ only mentions relations and classes that can be
reached by navigating $\uml$ using is-a relationships (in both
directions), or associations, in a direction that is compatible with
the unidirectionality imposed by $\B$.

On the other hand, as pointed out in Section~\ref{sec:nav}, in a
navigational model like $\B$ it is
impossible for artifact instances to share objects that belong to read-write
classes. This means that the evolution of an artifact instance is
completely independent from that of the other artifact
instances of the same type $\parentart{\cl{S}}$, or other artifact types.

By combining these two observations, we obtain that $\Phi$ obeys to a
sort of \emph{isolation property}:
\begin{compactitem}
\item $\Phi$ does not distinguish whether the system contains evolving artifact
  instances of types different than $\parentart{\cl{S}}$;
\item $\Phi$ does not distinguish whether the instances of
  $\parentart{\cl{S}}$ evolve in isolation, or co-evolve in a
  concurrent way.
\end{compactitem}
This isolation property is a data-aware variant of the free-choice
property of Petri nets. Thanks to such property, instead of directly considering the whole concurrent
evolution of the system, in which unboundedly many artifact instances
could be created over time and evolved in parallel, one can consider a
faithful, sound and complete abstraction of the system, which accounts
only for the concurrent evolution of those instances of type $\parentart{\cl{S}}$ present in the
  initial database of $\B$, plus
an additional artifact instance of type
  $\parentart{\cl{S}}$, nondeterministically
created and evolved in addition to the others.

Let $b_i$ be the number of artifact instances of type
$\parentart{\cl{S}}$ present in the initial database of the system.
From the fact that $\B$ is unidirectional and cardinality-bounded, we have
that each artifact instance can create only a bounded amount of objects
during its evolution. In fact, the number of objects that can be
created by an artifact instance is bounded by $(k \cdot N) ^ {l+1}$,
where:
\begin{inparaenum}[\it (i)]
\item $k$ is the number of relations in the schema (which bounds the
  number of relations that are collectively attached to an artifact/class in the schema),
\item $N$ is the maximum cardinality upper bound attached to a target role
belonging to a path rooted in $\parentart{\cl{S}}$, and
\item $l$ is the length of the
longest navigational path rooted in $\parentart{\cl{S}}$.
\end{inparaenum}
As a consequence, by considering the aforementioned
sound and complete abstraction, we have that at most $(b_i+1) \cdot N ^ {l+1}$ objects and artifact instances are simultaneously  present in a
system snapshot. The claim
then follows by:
\begin{inparaenum}[\it (i)]
\item applying the translation from \balsa models to
data-centric dynamic systems (DCDSs) \cite{Hariri2013}, provided in \cite{Estanol2013b};
\item  observing that the bound $(b_i+1) \cdot N ^ {l+1}$ implies that the
obtained DCDSs is state-bounded;
\item recalling that verification of $\mulpers$ properties over
  state-bounded DCDSs is decidable, and reducible to finite-state
  model checking \cite{Hariri2013}.
\end{inparaenum}
\end{proof}

\smallskip
An important open point is whether cardinality-boundedness is a
sufficient restriction for decidability per s\`e, i.e., without
necessarily imposing unidirectionality. The following theorem provides a
strong, negative answer
to this question, witnessing that both restrictions are
simultaneously required towards decidability.

\begin{theorem}
Checking termination of 1-cardinality-bounded, bidirectional \balsa
models is undecidable.
\end{theorem}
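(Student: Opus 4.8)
The plan is to reduce from the halting problem of $2$-counter machines on input $\tup{0,0}$ (Corollary~\ref{cor:2cm}), reusing the overall architecture of the proof of Theorem~\ref{thm:undec-unbounded} but changing how counter values are stored. There, each counter was encoded by a single $\cl{Counter}$ object carrying an \emph{unbounded} one-to-many \emph{contains} association to $\cl{Item}$ objects; this is exactly the target role of unbounded cardinality that $1$-cardinality-boundedness forbids. To stay $1$-cardinality-bounded I would instead represent the value of a counter as the \emph{length of a linked list} whose nodes are chained by a \emph{functional} (cardinality-$1$) self-association. Building such an unbounded chain is possible precisely because we drop unidirectionality: a self-referential association creates a class that depends on itself, i.e.\ a cycle in the dependency graph, which is exactly what ``bidirectional'' means here.

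Concretely, given a $2$-counter machine $\counter$ I would produce $\B_\counter = \tup{\uml^\dagger,\emptyset,\set{\chart{\cl{2CM}}^\dagger},\set{\proc{init}^\dagger,\proc{run}^\dagger}}$ as follows. $\uml^\dagger$ has a single artifact $\cl{2CM}$ with substates $\cl{Ready2CM}$ and $\cl{Halted2CM}=\tstate{\cl{2CM}}$; two node classes $\cl{Cell}_1,\cl{Cell}_2$, each carrying a self-association $\cl{next}_i$ whose image role (the one reached by navigating $\cl{next}_i$) has upper bound $1$; and two associations from $\cl{2CM}$ to $\cl{Cell}_i$ with functional roles $c_1,c_2$ pointing to the \emph{head} of chain $i$. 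The $\op{init}$ operation creates a $\cl{Ready2CM}$ instance $m$ with both heads empty, mirroring the input $\tup{0,0}$. The event $\op{run}$ encodes the program of $\counter$, translating each command exactly as in Table~\ref{tab:2cm-unbounded}, except that the counter primitives now operate on the head of a chain: \emph{incrementing} counter $i$ pushes a fresh cell as the new head (postcondition asserting $\oclq{m.c_i.oclIsNew()}$ and linking its $\cl{next}_i$ to the old head $\oclq{m.c_i@pre}$); \emph{testing for zero} checks $\oclq{m.c_i \limp isEmpty()}$; \emph{decrementing} pops the head, setting $m.c_i$ to $\oclq{m.c_i@pre.next_i}$ and deleting the former head. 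Every such OCL expression is a constant-depth path rooted in $m$, hence navigational; every target role ($c_1,c_2$ and the image roles of $\cl{next}_1,\cl{next}_2$) has upper bound $1$, hence $\B_\counter$ is $1$-cardinality-bounded; and each $\cl{Cell}_i \in \rw{\uml^\dagger}$ depends on itself through $\cl{next}_i$, hence $\B_\counter$ is bidirectional.

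For correctness I would argue exactly as in the proof of Theorem~\ref{thm:undec-unbounded}: since chains are built and consumed only by navigating from $m$, distinct $\cl{2CM}$ instances never share cells and evolve in isolation, so either all of them reach $\cl{Halted2CM}$ or none do. As the chain lengths faithfully track the two counter values throughout $\op{run}$, the simulating instances reach $\cl{Halted2CM}$ --- i.e.\ $\B_\counter$ terminates --- if and only if $\counter$ halts on $\tup{0,0}$, which by Corollary~\ref{cor:2cm} is undecidable.

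The point that makes the result nontrivial, and the thing to get right, is reconciling an \emph{unbounded} data structure with $1$-cardinality-boundedness and navigationality \emph{simultaneously}. The linked-list encoding does this because the chain grows without bound yet every operation inspects only the head and its immediate successor, so all queries stay navigational at constant depth and all target cardinalities stay $1$; the unboundedness hides entirely in the number of chained cells. This is precisely the situation excluded by Theorem~\ref{thm:dec}: the self-loop makes the dependency graph cyclic, so the longest path one can navigate from $\parentart{2CM}$ through the schema is unbounded and the bound $(k\cdot N)^{l+1}$ on reachable objects no longer applies, even though $N=1$. The only delicate bookkeeping is checking that the $@pre$-based increment postcondition relinks the head correctly --- new head new, its successor the old head, old head otherwise untouched --- so that no cell is ever orphaned or aliased; this is routine to verify.
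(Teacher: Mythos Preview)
Your proposal is correct and follows essentially the same approach as the paper: both encode each counter as a linked list built from a self-referential association, which makes a read-write class depend on itself (hence the model is bidirectional) while every target role stays $1$-bounded. The paper's concrete encoding differs only in packaging --- it uses a single $\cl{Item}$ class with one self-association $\mathit{hasNext}$ navigated in \emph{both} directions (the two counters are the right and left chains from a distinguished zero element $m.zero$, and increment/decrement insert/remove the item adjacent to zero), rather than two separate cell classes each navigated one way with push/pop at the head --- but the reduction, the isolation argument, and the conclusion are the same.
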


\begin{proof}
Given a 2-counter machine
$\counter$, we produce a corresponding 1-cardinality-bounded, bidirectional \balsa model
$\B_\counter =
\tup{\uml^b,\emptyset,\set{\chart{\cl{2CM}}^b},\set{\proc{init}^b,\proc{run}^b}}$,
whose components are illustrated in Table~\ref{tab:2cm-unbounded}.
$\uml^b$ contains a single artifact $\cl{2CM}$, which can be
ready or halted, the latter being the termination state
($\tstate{\cl{2CM}} = \cl{Halted2CM}$), as attested by
$\chart{\cl{2CM}}^b$.
When the $\op{init}$ operation is applied, a new instance $m$ of
$\cl{Ready2CM}$ is created, attaching a dedicated item that
represents the \emph{zero} point for both counters.

Intuitively, $m$ mirrors the two counters in $\C$ as follows. Thanks
to the fact that $m$ can navigate and manipulate the association \emph{hasNext} in
both directions (i.e., from left to right and from right to left), the
length of the right chain from the zero element $m.zero$ corresponds to the
value of the first counter, whereas the length of the left chain from
the zero element corresponds to the value of the second counter.

The artifact instance $m$ suitably encodes the commands in $\C$ as follows:
\begin{compactitem}
\item Incrementing the first counter requires to create a new
  $\cl{Item}$, and to put this object between the zero element and the
  old right-successor of it (cf.~$\op{inc}_1$, which conveniently
  exploits notation ``@pre'' to query the configuration of objects in
  the last predecessor state). This has the effect of increasing the
  length of the right chain of one unit.  The alternative operation
  $\op{incZ}_1$ handles the special case in which there is no
  right-successor from the zero element: in this case incrementing the
  counter just corresponds to add a new item on the right of the zero element.
\item Testing whether the first counter is 0 translates into checking
  whether set $m.zero.r$ is empty, i.e., whether it is true that the
  zero element does not have any right successor.
\item Decrementing the first counter translates into the removal of
  one item from set right chain of the zero element. There are two
  possible cases. In the first case, there is just a single
  right-successor, i.e., the counter has value 1. In this case,
  operation $\op{decS_1}$ just ensures that $m.zero.r$ does not have
  anymore this successor. If instead the right chain is longer than 1,
  then the decrement is handled by making the second right-successor
  of $m.zero$ the new direct right-successor of it, at the same time
  isolating the old direct right-successor.
\end{compactitem}
Table~\ref{tab:2cm-bidirectional} shows how these three aspects can be
formalized in terms of activity diagrams and OCL queries.
The management of the second
counter is analogous, with the only difference that it navigates the
left chain of the zero element, i.e., it exploits the $l$ role of
relation \emph{hasNext} in place of the $r$ role. Figure~\ref{fig:unbounded-calc}
intuitively shows the evolution of a specific configuration of the
system in response to the application of two operations.

Observe that, as clearly shown by $\uml^b$, $\B_\C$ is
1-cardinality-bounded, and is bidirectional, because relation
\emph{hasNext} is navigated on both directions, making both $l$ and
$r$ target roles. Furthermore, like for the reduction in
Theorem~\ref{thm:undec-unbounded}, each artifact instance is created
in state $\cl{Ready2CM}$, and evolves completely independently
from the other artifact instances.  This means that either all
instances of $\cl{Ready2CM}$ halt, or
none halt.
The claim follows by observing that $\C$ halts if and only
if all instances of $\cl{Ready2CM}$ eventually reach the $\slabel{Halted2CM}$
state, i.e., properly terminate.
\begin{table*}[t!]
\begin{tabularx}{\textwidth}{|c|@{}C|}
\hline
$\uml^b$
&
$\chart{\cl{2CM}}^b$
\\
\hline
\tableg{imgs/uml-bidirectional}{8cm}
&
\tableg{imgs/chart-bidirectional}{8cm}
\\
\hline
\end{tabularx}

\begin{footnotesize}
\begin{tabularx}{\textwidth}{|l@{\ }|p{3.6cm}@{\ }|X|}
\hline
 $P_{init}^b$ &
\tableg{imgs/init-bidirectional}{3.8cm}
&
\vspace*{-.6cm}
$
\op{init}\
\begin{array}[t]{@{}l@{\ }l}
 \textbf{pre:}  & \neg  (\cl{Ready2CM}.allInstances()\limp exists(m'|m'.id = id) ) \\
\textbf{post:} &
\oclq{ \cl{Ready2CM}.allInstances()\limp exists(m|
\begin{array}[t]{@{}l}
m.oclIsNew() \land{} m.id = id \land result
 =m\\
\land m.zero \limp exists(i|i.oclIsNew())
~)
\end{array}
}\\
\end{array}
$
 \\
\hline
\end{tabularx}
\end{footnotesize}

\begin{footnotesize}
\begin{tabularx}{\textwidth}{|l@{\ }|p{2.7cm}@{}|@{}p{4cm}|X|}
\hline
\multirow{4}{*}{$P_{run}^b$}
&
 start
&
\tableg{imgs/start-bidirectional}{1.6cm}
&
\\
\cline{2-4}
&
 $k: \inc{1}{k'}$
&
\tableg{imgs/inc-bidirectional}{4.2cm}
&
$
\begin{array}{@{}l}
Q_0^1 = (\oclq{m.zero.r \limp isEmpty()})\\[6pt]
\begin{array}{@{}ll@{\ }l}
\op{incZ_1}
& \textbf{post:} & \oclq{m.zero.r \limp exists(i|i.oclIsNew())}
\\
\op{inc_1}
& \textbf{post:} &  \oclq{m.zero.r \limp exists(i|}
\begin{array}[t]{@{}l}
\oclq{i.oclIsNew()}\\
\oclq{ \land  i.r = m.zero.r@pre)}
\end{array}
\end{array}
\end{array}
$
\\
\cline{2-4}
&
 $k: \cdec{1}{k'}{k''}$
&
\tableg{imgs/dec-bidirectional}{4.2cm}
&
\vspace*{-.75cm}
$
\begin{array}{@{}l}
Q_0^1 =  (\oclq{m.zero.r \limp isEmpty()})
\quad
Q_1^1 =  (\oclq{m.zero.r.r \limp isEmpty()})\\[6pt]
\begin{array}{@{}ll@{\ }l}
\op{decS_1}
& \textbf{post:} & \oclq{m.zero.r \limp isEmpty()}
\\
\op{dec_1}
& \textbf{post:} &  \text{let } \oclq{old_r = m.zero.r@pre, new_r =  m.zero.r.r@pre }\\
&& \text{in } \oclq{m.zero.r = new_r \land (old_r.r \limp isEmpty())}
\end{array}
\end{array}
$
\\
\cline{2-4}
&
$n: \halt$
&
\tableg{imgs/halt-bidirectional}{4cm}
&
$
\op{halt} \
\begin{array}[t]{@{}l@{\ }l}
\textbf{post:} &  \oclq{\neg(m.oclIsTypeOf(\cl{Ready2CM}))}\\
&\oclq{\land m.oclIsTypeOf(\cl{Halted2CM})}
\end{array}
$
\\
\hline
\end{tabularx}
\end{footnotesize}
\caption{1-cardinality-bounded, bidirectional \balsa model simulating  a
  2-counter machine}
\label{tab:2cm-bidirectional}
\end{table*}
\begin{figure*}[t!]
\centering
\includegraphics[width=.8\textwidth]{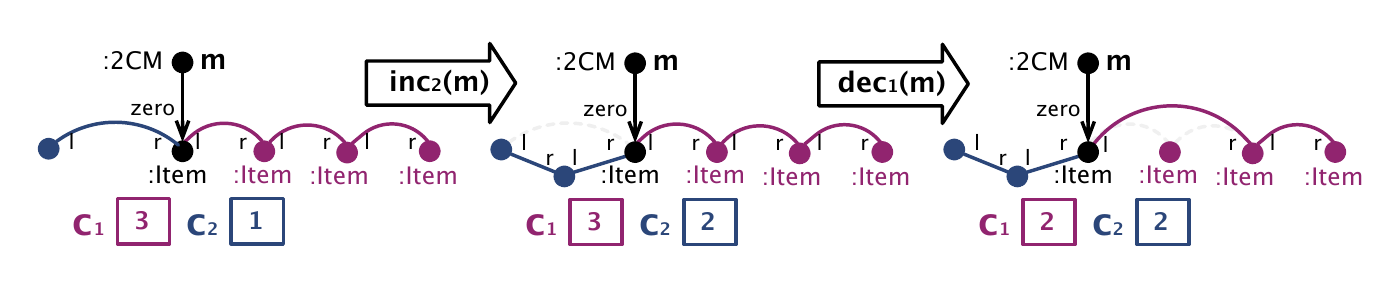}
\vspace*{-.4cm}
\caption{Sample counter manipulation using the \balsa model in Table~\ref{tab:2cm-bidirectional}}
\label{fig:bidirectional-calc}
\end{figure*}
\end{proof}

\begin{table*}[t!]
\begin{tabularx}{\textwidth}{|c|@{}C|}
\hline
$\uml^{bu}$
&
$\chart{\cl{Conn}}$
\\
\hline
\tableg{imgs/uml-shared}{8cm}
&
\tableg{imgs/chart-shared}{8cm}
\\
\hline
\end{tabularx}

\begin{footnotesize}
\begin{tabularx}{\textwidth}{|l@{\ }|@{}p{3.5cm}@{\ }|X|}
\hline
 $P_{init}^{bu}$ &
 \tableg{imgs/init-shared}{3.5cm}
&
\vspace*{-1.6cm}
$
\begin{array}{@{}l}
\begin{array}{@{}l@{\ }l}
Q_{PC} & =  \oclq{\cl{PC}.allInstances() \limp isEmpty()}\\
Q_{I} & =  \oclq{\cl{Item}.allInstances() \limp isEmpty()}\\[6pt]
\end{array}
\\
\begin{array}[t]{@{}l@{\ }l@{\ }l}
\op{createPC}
& \textbf{post:} &
\oclq{\cl{PC}.allInstances() \limp exists(pc|pc.oclIsNew() \land
  pc.pos = \cval{1}) )}
\\
\op{init} &
 \textbf{pre:}  & \neg  (\cl{ReadyConn}.allInstances()\limp exists(m'|m'.id
 = id) ) \\
& \textbf{post:} &
\oclq{ \cl{ReadyConn}.allInstances()\limp exists(c|
\begin{array}[t]{@{}l}
c.oclIsNew() \land  c.id = id \land result = c \\
\land c.pc = (\cl{PC}.allInstances()))
\end{array}
}\\
\op{attach}
&
\textbf{post:}
&
\oclq{c.r \limp exists(s_{1} | s_1.oclIsNew() \land s_{1}.lastR = \cval{true} \land
  s_{1}.startC1 = \cval{true}) }\\
&&
\oclq{\land c.l \limp exists(s_2 | s_2.oclIsNew() \land s_2.lastL = \cval{true} \land s_2.startC2 = \cval{true}) }
\end{array}
\end{array}
$
 \\
\hline
\end{tabularx}
\end{footnotesize}

\begin{footnotesize}
\begin{tabularx}{\textwidth}{|l@{\ }|p{2.7cm}@{}|@{}p{4cm}|X|}
\hline
\multirow{4}{*}{$P_{run}^{bu}$}
&
 start
&
\tableg{imgs/start-shared}{1.6cm}
&
\\
\cline{2-4}
&
 $k: \inc{1}{k'}$
&
\tableg{imgs/inc-shared}{4.2cm}
&
$
\begin{array}{@{}l}
\begin{array}{@{}l@{\ }c@{\ }l}
Q_{+} & = & \oclq{(c.pc = \cval{k})\land (c.l \limp
  isEmtpy()) \land (c.r \limp isEmpty())}\\
Q_{loop} & = & \oclq{(c.pc = \cval{k}) \land \neg((c.l \limp
  isEmtpy()) \land (c.r \limp isEmpty()))}\\
Q_{fw}^{\mathsf{k'_i}} & =  & (\oclq{c.pc = \cval{k'_i}})
\text{ for } \cval{k'_i} \in \set{\cval{1},\ldots,\cval{n}} \setminus \set{\cval{k},\cval{k'}}
\end{array}
\\[6pt]
\begin{array}[t]{@{}l@{\ }l@{\ }l}
\op{\cval{k}Inc_1}
& \textbf{post:} &
\text{let } \oclq{i = (\cl{Item}.allInstances() \limp select(i'|i'.lastR))}\\
&& \text{in } \oclq{c.l = i \land i.lastR = \cval{false}} \\
&& \phantom{\text{in }} \oclq{\land c.r
  \limp exists(i''|i''.isOclNew() \land i''.lastR = \cval{true}  )}\\
&& \phantom{\text{in }} \oclq{\land c.pc = \cval{k'} }
\\
\end{array}
\end{array}
$
\\
\cline{2-4}
&
 $k: \cdec{1}{k'}{k''}$
&
\tableg{imgs/dec-shared}{4.2cm}
&
\vspace*{-1.3cm}
$
\begin{array}{@{}l}
\begin{array}{@{}l@{\ }c@{\ }l}
Q_0^1 & = & \oclq{(c.pc = \cval{k})\land c.r.lastR \land
  c.r.startC1}\\
Q_-^1 & = & \oclq{(c.pc = \cval{k})\land c.r.lastR \land \neg c.r.startC1}\\
Q_{loop} & = & \oclq{(c.pc = \cval{k}) \land \neg c.r.lastR}\\
Q_{fw}^{\mathsf{k'_i}} & =  & (\oclq{c.pc = \cval{k'_i}})
\text{ for } \cval{k'_i} \in \set{\cval{1},\ldots,\cval{n}} \setminus \set{\cval{k},\cval{k'},\cval{k''}}\\[6pt]
\end{array}
\\
\begin{array}[t]{@{}ll@{\ }l}
\op{\cval{k}PC}
& \textbf{post:} &
\oclq{ c.pc = \cval{k''} }
\\
\op{kDec_1}
& \textbf{post:} &
\text{let } \oclq{i_r = c.r@pre, i_l = c.l@pre}\\
&& \text{in } \oclq{i_l.lastR = \cval{true} \land i_r.lastR =
  \cval{false}}\\
&& \phantom{\text{in }}\oclq{ \land (c.l \limp isEmpty())
\land (c.r. \limp isEmpty())}
\end{array}
\end{array}
$
\\
\cline{2-4}
&
$n: \halt$
&
\tableg{imgs/halt-shared}{4cm}
&
$
\op{halt} \
\begin{array}[t]{@{}l@{\ }l}
\textbf{post:} & \oclq{\neg(m.oclIsTypeOf(\cl{ReadyConn}))}\\
&\oclq{ \land m.oclIsTypeOf(\cl{HaltedConn})}
\end{array}
$
\\
\hline
\end{tabularx}
\end{footnotesize}
\caption{1-cardinality-bounded, unidirectional \balsa model with
  shared objects simulating a
  2-counter machine}
\label{tab:2cm-shared}
\end{table*}

\begin{figure*}[t!]
\centering
\includegraphics[width=\textwidth]{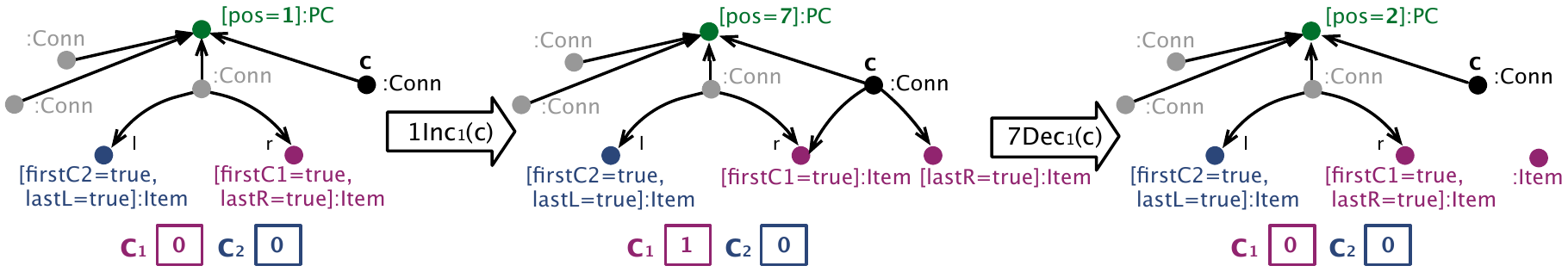}
\caption{Sample counter manipulation using the \balsa model in Table~\ref{tab:2cm-shared}}
\label{fig:shared-calc}
\end{figure*}

\subsection{Models With Shared Instances}
As argued in Section~\ref{sec:nav}, unidirectional \balsa models are
not able to make artifact instances share (read-write) objects. In this section, we
study what happens if we relax unidirectionality so as to support this
feature. A \emph{unidirectional \balsa model with shared instances}  $\B = \tup{\uml,\ocl,\chartset,\procset}$
 is a \balsa model in which, inside navigational expressions, it is possible to
add free queries over $\rw{\uml}$, provided that they \emph{do not
  contain} the expression $\oclq{oclIsNew()}$. Intuitively, this means
that new objects can only be created through standard navigational OCL
expressions, but at the same time it is possible to establish
associations with already existing objects that are not reachable by
simply navigating from the artifact instance.
The following theorem shows that this relaxation makes verification
again undecidable.
\begin{theorem}
Checking termination of 1-cardinality-bounded, unidirectional \balsa
models with shared instances is undecidable. 
\end{theorem}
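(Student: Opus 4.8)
The plan is to reduce from the halting problem of 2-counter machines, undecidable by Corollary~\ref{cor:2cm}, reusing the overall template of Theorems~\ref{thm:undec-unrestricted} and~\ref{thm:undec-unbounded}. Given a 2-counter machine $\counter$, I would produce the 1-cardinality-bounded, unidirectional \balsa model with shared instances $\B_\counter = \tup{\uml^{bu},\emptyset,\set{\chart{\cl{Conn}}},\set{\proc{init}^{bu},\proc{run}^{bu}}}$ displayed in Table~\ref{tab:2cm-shared}, containing a single artifact $\cl{Conn}$ with states $\cl{ReadyConn}$ and the terminal state $\tstate{\cl{Conn}} = \cl{HaltedConn}$.

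The conceptual crux --- and the reason the decidability argument of Theorem~\ref{thm:dec} no longer applies --- is that unidirectionality together with 1-cardinality-boundedness forbids a single artifact instance from holding a counter as a private, unbounded chain of objects. I would therefore encode the two counters in the \emph{shared} pool of $\cl{Item}$ objects, a class in $\rw{\uml^{bu}}$, and represent the value of the first (second resp.) counter by the length of the right (left resp.) chain of items, whose endpoint carries the flag \emph{lastR} (\emph{lastL} resp.) and whose origin carries the marker \emph{startC1} (\emph{startC2} resp.). A running connector $c$ would then act as a cursor over this shared pool, locating the items it needs through \emph{free} queries such as $\oclq{\cl{Item}.allInstances() \limp select(i'|i'.lastR)}$. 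These are exactly the queries that shared instances admit inside navigational expressions, since they do not mention $\oclq{oclIsNew()}$: fresh items are still created navigationally through $c.r$, but the cursor is allowed to re-discover items that it does not privately own --- the feature unavailable to the purely navigational models of Section~\ref{sec:nav}.

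Next I would check the three counter primitives against Table~\ref{tab:2cm-shared}. Increment ($\op{kInc_1}$) selects the current last item by a free query, creates a fresh item with $\oclq{oclIsNew()}$, and relinks $c$ so that the \emph{lastR} flag migrates to the new item; the zero test inspects whether the last item is also the start marker (guard $Q_0^1$); decrement ($\op{kDec_1}$) advances the cursor along the $Q_{loop}$ branch until $\oclq{c.r.lastR}$ holds, then retracts the last item and lets its predecessor inherit the flag. The attribute \emph{pc} of $c$ mirrors the program counter of $\counter$, and the forward guards $Q_{fw}^{\cval{k'_i}}$ route control to the fragment of the command to be executed next, so that $\proc{run}^{bu}$ walks through the whole program of $\counter$; Figure~\ref{fig:shared-calc} depicts the effect of two such steps. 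Reading off $\uml^{bu}$, I would certify that every target role (in particular $l$ and $r$) has upper bound $1$, so the model is 1-cardinality-bounded, and that the directed association graph is acyclic, so it is unidirectional, while the free $\cl{Item}$ queries make it a model \emph{with shared instances}.

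The step I expect to be the main obstacle is the correctness invariant, rather than the syntactic classification of the model. Precisely because items are now shared, distinct $\cl{ReadyConn}$ instances are no longer isolated --- this is exactly the failure of the isolation property that Theorem~\ref{thm:dec} relies upon --- so I must rule out spurious runs. I would argue, by induction over the executions of $\proc{run}^{bu}$, that the flag-and-chain structure in the shared $\cl{Item}$ pool always faithfully encodes the pair of counter values, that each connector deterministically tracks the control flow of $\counter$, and that sharing can never let a connector reach $\cl{HaltedConn}$ without having simulated a genuine halting computation. Granted this invariant, and the fact that all connectors are born in $\cl{ReadyConn}$ and run the same program, so that either all or none of them terminate, the reduction closes: $\counter$ halts if and only if every instance of $\cl{ReadyConn}$ eventually reaches the $\slabel{HaltedConn}$ state, i.e., the artifact termination property holds of $\B_\counter$.
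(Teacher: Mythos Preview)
Your reduction template and the syntactic checks (1-cardinality-boundedness, unidirectionality, shared-instance queries) are fine, but your reading of the mechanics in Table~\ref{tab:2cm-shared} misses the central synchronisation device, and with it the reason the simulation is faithful. You describe a single connector $c$ acting as a moving cursor that ``advances along the $Q_{loop}$ branch'' and carries its own program-counter attribute. That is not what the construction does. In $\uml^{bu}$ there is no item-to-item successor relation, so a lone connector cannot walk the chain; the chain \emph{is} the collection of $\cl{Conn}$ instances, each one a static link between two adjacent $\cl{Item}$s. An increment is executed by a currently \emph{unattached} connector (guard $Q_+$): it grabs the item with $lastR = \cval{true}$ on its left via the free query and creates a fresh right item, thereby extending the chain by one link. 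A decrement is executed by the connector already sitting at the end of the chain (the one for which $c.r.lastR$ holds), which detaches itself. The branch $Q_{loop}$ is not cursor movement: it is a busy-wait for every connector that is \emph{not} responsible for the current step.

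What makes all of this coherent is the class $\cl{PC}$, which you treat as a per-connector attribute but is in fact a single shared object: $\op{createPC}$ instantiates it once, and $\op{init}$ points every new connector's $pc$ role to that same instance. Updating $c.pc.pos$ therefore updates the program counter seen by \emph{all} connectors, and this is precisely how the unboundedly many artifact instances are kept in lockstep on the current instruction of $\counter$. Your final paragraph correctly identifies that isolation fails and that interference must be ruled out, but under your cursor reading the invariant you want is actually false: two independent cursors with private $pc$ values would race on the shared $lastR$/$lastL$ flags. The paper's argument rests instead on the shared $\cl{PC}$: at any moment exactly one instruction index is current, at most one connector satisfies the enabling guard for that instruction (a free one for an increment, the end-of-chain one for a decrement), and every other connector idles in $Q_{loop}$ until the responsible connector atomically performs the step and advances the shared counter. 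Halting then means some connector sets $pos$ to $\cval{n}$, after which every present and future connector routes to $\op{halt}$; this is what yields the ``all halt or none halt'' conclusion.
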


\begin{proof}
Given a 2-counter machine
$\counter$, we produce a corresponding 1-cardinality-bounded
unidirectional \balsa model with shared instances
$\B_\counter =
\tup{\uml^{bu},\emptyset,\set{\chart{\cl{2CM}}^{bu}},\set{\proc{init}^{bu},\proc{run}^{bu}}}$,
whose components are illustrated in Table~\ref{tab:2cm-shared}.

As shown in Table~\ref{tab:2cm-shared}, $\uml^{bu}$ contains a single
artifact $\cl{Conn}$, which can be  ready or halted, the latter being the termination state
($\tstate{\cl{Conn}} = \cl{HaltedConn}$), as attested by
$\chart{\cl{Conn}}^{bu}$.
Due to cardinality boundedness and unidirectionality, a single
instance of $\B_\counter$ is not powerful enough to simulate
$\counter$. Hence, differently from the previous undecidability
proofs, the two counters are now simulated by unbouded chains of
artifact instances. In this light, the main difficulty is to properly
``synchronize'' such different instances so as to ensure that they
collectively implement the program of $\counter$, without interfering
with each other. To realize such a synchronization, all instances of
$\cl{Conn}$ share an instance of $\cl{PC}$, which represents a
``program counter'' to keep track of the current instruction to be
processed in $\counter$.  Intuitively, each instance of $\cl{Conn}$
represents a connection between two items; a chain of three items is
then built by using two instances of $\cl{Conn}$, making sure that the
first instance has on the right the same $\cl{Item}$ that the second
instance has on the left. This structure constitutes the basis for
simulating a counter.

Let us now go into the details of such a simulation. The
initialization transition in $\chart{\cl{Conn}}^{bu}$ consists now of
a complex activity diagram $P_{init}^{bu}$, which consists of the
following steps:
\begin{compactitem}
\item Initially, if there is no instance of the program counter, one
  instance is created, setting its ``position'' (represented by a
  string attribute \emph{pos}) to the constant string $\cval{1}$. If an
  instance of $\cl{PC}$ already exists, then this step is skipped.
\item The second step consists of the creation of a new connection artifact
  instance (of type $\cl{Conn}$), with a distinguished identifier. Upon
  creation, the $pc$ role of this connection points to the only available
  instance of $\cl{PC}$.
\item The third steps is applied only if no instance of class
  $\cl{Item}$ exists in the system. In this case, two special items
  are created so as to represent the zero elements for the two
  counters of $\counter$. This is done as follows:
\begin{compactitem}
\item The zero element for the first counter consists of a newly
  created instance $i_0^R$  of  $\cl{Item}$, whose boolean attribute $startC1$ is set to
  $\true$. Item  $i_0^R$ is attached to the right of the just created
  instance of $\cl{Conn}$. Since  $i_0^R$ is not on the left of any connection, also its boolean attribute $lastR$ is set to
  $\true$.
\item The zero element for the second counter consists of a newly
  created instance $i_0^L$  of  $\cl{Item}$, whose boolean attribute $startC2$ is set to
  $\true$. Item  $i_0^L$ is attached to the left of the just created
  instance of $\cl{Conn}$. Since  $i_0^L$ is not on the right of any connection, also its boolean attribute $lastL$ is set to  $\true$.
\end{compactitem}
\end{compactitem}
The structure obtained when $4$ instances of $P_{init}^{bu}$ are
executed in a row can be seen on the left of
Figure~\ref{fig:shared-calc}.

The idea behind the manipulation of counters starting from this
structure is to extend (resp., reduce) the chain on the right of item $i_0^R$ to
increment (resp., decrement) the first counter, and to extend (resp., reduce) the chain on the left of item $i_0^R$ to
increment (resp., decrement) the second counter. Since the case of the
second counter is obtained by just mirroring that of the first
counter, we just concentrate on the first counter.

The first important observation, which is common to the case of
counter increment and decrement, concerns the problem of
synchronization. On the one hand, as already pointed out we want all
instances of $\cl{Conn}$ to collectively realize the program of
$\counter$. On the other hand, there is no control on when new
instances of $\cl{Conn}$ are created. In particular, it could be the
case that a new connection is created when the other active
connections have already executed part of the program of
$\counter$. Similarly, since there is no control on how the different
active instances of $\cl{Conn}$ interleave with each other, when a
connection executes the portion of $P_{run}^{bu}$ corresponding to
instruction number $k$ in $\counter$, it must ensure that $k$ is
indeed the current intruction. More specifically, instruction number
$k$ always contains an initial choice, used to check whether the
program counter is indeed $k$ and, if so, whether the instance of
$\cl{Conn}$ that is executing the process is responsible for the
execution of instruction $k$, or should instead just execute an
``idle'' loop and wait that the responsible connection executes step
$k$. If the program counter stores in its $pos$ attribute an
instruction identifier different than $k$, then the process just
``jumps'' to the right step. If instead the program counter
corresponds to $k$, then a different behavior is exhibited depending
on whether the instruction number $k$ corresponds to an increment or
conditional decrement for the first counter.

In the case of increment:
\begin{compactitem}
\item If the connection is not associated to any item on its left and
  its right (i.e., it is not part of any chain), then the connection
  becomes responsible for the increment, which is atomically executed
  using the operation $\textbf{k}Inc_1$. The increment is realized as
  follows:
\begin{compactitem}
\item The unique item (called $i$) that has attribute $lastR$ set to
  $\true$ is selected.
\item This item is attached on the left of the current connection,
  setting its $lastR$ attribute to $\false$. In this way, it is easy
  to see that an item has $lastR = \false$ if and only if there is no
  connection that has it on the left.
\item A new item is created and attached on the right of the current
  connection, setting its $lastR$ attribute to $\true$. This newly
  created item represents the increment of the first counter, and the
  current connection acts as the last connection of the chain
  simulating the first counter.
\item The program counter is updated, setting its $pos$ attribute to
  the string that corresponds to the new instruction identifier
  $k'$. Since $k'$ is a pre-defined string, each increment is
  different from the others, and this is why each specific increment
  is mapped to a separate operation in $P_{run}^{bu}$.
\end{compactitem}
Considering e.g., the case of instruction $1: \inc{1}{7}$, the
central part of Figure \ref{fig:shared-calc} represents the new data
configuration after the execution of this step by one of the
connections that are currently active but not associated to any item.
\item If instead the connection is already attached to an item on the
  left or on the right, then it executes an idle step, going back to
  check whether the program counter is still $k$ or has instead been updated.
\end{compactitem}

In the case of conditional decrement:
\begin{compactitem}
\item If the connection has on its right an item whose attribute
  $lastR$ is $\true$, then the connection becomes responsible for the
  conditional decrement. Two cases may then arise: either the first
  counter is 0, and consequently only the program counter must be
  updated, or the counter is positive, and consequently the counter
  must be decremented before updating the program counter. The test
  for zero can be easily captured in $\B^{bu}$ by testing whether the
  item having $lastR = \true$ also has $startC1 = \true$: if so, then
  the first counter is zero, if not, then the first counter is
  positive. In the former case, captured by query $Q^1_0$, the
  specific task $\textbf{k}PC$ is executed, whose effect is simply to
  update the attribute $pos$ of the program counter to the string
  corresponding to $k''$; since $k''$ is a pre-defined string, each
  program counter update is different from the others, and this is why
  each specific program counter update is mapped to a separate
  operation in $P_{run}^{bu}$. In the latter case, captured by query
  $Q^1_-$, an atomic decrement and program counter update is executed using
the operation $\textbf{k}Dec_1$. The decrement is realized as
  follows:
\begin{compactitem}
\item The item that was previously on the right of the connection is
  updated making its $lastR$ attribute equal to $\false$.
\item The item that was previously on the left of the connection
  (i.e., on the right of the previous connection along the chain) is
  updated making its $lastR$ attribute equal to $\true$.
\item The connection is disconnected from both such items, hence
  reducing the chain of one item. This has also the indirect effect of
  making the connection eligible for being responsible of a successive
  increment.
\item The program counter is updated, setting its $pos$ attribute to
  the string that corresponds to the new instruction identifier
  $k'$. Since $k'$ is a pre-defined string, each decrement is
  different from the others, and this is why each specific decrement
  is mapped to a separate operation in $P_{run}^{bu}$.
\end{compactitem}
Considering the case of instruction $7: \cdec{1}{2}{9}$, the
right part of Figure \ref{fig:shared-calc} represents the new data
configuration after the execution of this step by the
connection that is currently at the end of the right chain.
\item If instead the connection does not have on its right the element
  whose $lastR$ attribute is $\true$, then it executes an idle step, going back to
  check whether the program counter is still $k$ or has instead been updated.
\end{compactitem}
As soon as one of the active connection artifact instances sets the
program counter to the constant $\cval{n}$, all active connections
move to the final part of $P_{run}^{bu}$, where they are moved from
the $\cl{ReadyConn}$ to the $\cl{HaltedConn}$ state. If new instances
of $\cl{Conn}$ are subsequently created, they immediately jump to
execute this task as well (in fact, they all share the same program
counter, whose $pos$ attribute continues to be $\cval{n}$).
 This means that either all
instances of $\cl{ReadyConn}$ halt, or
none halts.
The claim follows by observing that $\C$ halts if and only
if all instances of $\cl{ReadyConn}$ eventually reach the $\slabel{HaltedConn}$
state, i.e., properly terminate.
\end{proof}

\smallskip
We close this thorough analysis by showing that, if we introduce a
bound on the number of artifact instances that are simultaneously
active in the system, verification becomes decidable for this specific
class of \balsa models. This technique cannot be applied
to unrestricted nor unbounded \balsa models: by inspecting the proofs of Theorems~\ref{thm:undec-unrestricted}
and~\ref{thm:undec-unbounded}, one can easily notice that
undecidability holds even when there is just a single active artifact
instance.
\begin{theorem}
\label{thm:dec-bounded}
Verification of $\mulpers$ properties over cardinality-bounded, unidirectional \balsa
models with shared instances of read-write classes is decidable and reducible to finite-state model checking when the number of simultaneously active artifact
instances is bounded.
\end{theorem}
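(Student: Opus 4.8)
The plan is to follow the same blueprint as the proof of Theorem~\ref{thm:dec}: I would encode $\B$ as a data-centric dynamic system (DCDS) using the translation of~\cite{Estanol2013b}, show that the resulting DCDS is \emph{state-bounded}, and then invoke the result of~\cite{Hariri2013} that verification of $\mulpers$ properties over state-bounded DCDSs is decidable and reducible to finite-state model checking. The decisive difference with respect to Theorem~\ref{thm:dec} is that the isolation property is no longer available, since artifact instances may now share read-write objects; its role is taken over by the standing hypothesis that the number of simultaneously active (i.e., present) artifact instances remains bounded in every reachable state. Note that, precisely because state-boundedness will be obtained directly from this hypothesis rather than from the shape of the property, no navigational-compatibility restriction on the $\mulpers$ property is needed here, which is consistent with the unqualified statement of the theorem.

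Next I would establish state-boundedness. Let $K$ be the bound on the number of artifact instances simultaneously present in any reachable state of $\B$. Exactly as in Theorem~\ref{thm:dec}, unidirectionality together with cardinality-boundedness entails that the navigational closure of a single artifact instance is finite: following target roles from the instance, one reaches at most $(k\cdot N)^{l+1}$ objects, where $k$ is the number of associations in $\uml$, $N$ is the maximal cardinality upper bound attached to a target role, and $l$ is the length of the longest navigational path in $\uml$. Because $\B$ is unidirectional, every association tuple is navigated outward starting from an artifact class; hence every object populating a reachable state must lie in the navigational closure of \emph{some} artifact instance currently present in the system---including the \emph{shared} objects, which are selected by navigation from (the scope of) a present artifact even when they were originally created by a different one. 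Consequently, the active domain of every reachable state contains at most $K\cdot(k\cdot N)^{l+1}$ elements, a bound that is uniform across the entire (in general infinite) run tree.

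This uniform bound on the size of every reachable state is exactly state-boundedness, so the claimed decidability and reduction to finite-state model checking follow from~\cite{Hariri2013}. The step I expect to be the main obstacle is this very state-boundedness claim---specifically, the assertion that every object of the active domain sits in the closure of some present artifact instance. Recall that in the undecidability proof for unidirectional models with shared instances, unboundedness arose precisely because arbitrarily many cooperating artifact instances jointly maintained unbounded shared chains of items; the hypothesis on $K$ is tailored to neutralize exactly this mechanism. The care required is to confirm that an object cannot survive in the active domain once it has been detached from every present artifact instance: an object created by an artifact that has since been removed persists only as long as some present artifact still references it, so that bounding the artifact population genuinely bounds the \emph{whole} active domain---shared objects included---rather than merely the artifact instances themselves.
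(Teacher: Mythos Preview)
Your proposal is correct and follows essentially the same approach as the paper's own proof: derive the per-artifact bound $(k\cdot N)^{l+1}$ from unidirectionality and cardinality-boundedness, multiply by the assumed bound on simultaneously active artifact instances to obtain a uniform bound on the active domain, translate to a DCDS via \cite{Estanol2013b}, and conclude via the state-boundedness result of \cite{Hariri2013}. Your discussion of the ``main obstacle'' (whether detached objects can persist outside every artifact's navigational closure) is more explicit than the paper, which simply asserts the bound, but the argument and the final bound $b\cdot(k\cdot N)^{l+1}$ coincide.
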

\begin{proof}
Let $\B$ be a cardinality-bounded, unidirectional \balsa
model. By combining unidirectionality and cardinality-boundedness, we have
that an artifact instance can create only a bounded amount of objects
during its evolution. In fact, the number of objects that can be
created is bounded by $(k\cdot N)^{l+1}$, where $k$, $N$ and $l$ are
as in the proof of Theorem~\ref{thm:dec}. 
Since the number of simultaneously active artifact
instances is bounded, say, by a number $b$, then at each time point
the number of objects and artifact instances present in the overall
system is bounded by $b \cdot (k\cdot N)^{l+1}$. The claim
then follows by:
\begin{inparaenum}[\it (i)]
\item applying the translation from \balsa models to
DCDSs, described in \cite{Estanol2013b};
\item  observing that the bound $b \cdot (k\cdot N)^{l+1}$ implies that the
obtained DCDS is state-bounded;
\item recalling that verification of $\mulpers$ properties over
  state-bounded DCDSs is decidable, and reducible to finite-state
  model checking \cite{Hariri2013}.
\end{inparaenum}
\end{proof}
It is important to observe that bounding the number of
simultaneously active artifact instances still allows one to create an
unbounded amount of artifact instances over time, provided that they
do not accumulate in the same snapshot. In this light,
Theorem~\ref{thm:dec-bounded} closely resembles the result given in
\cite{Solomakhin2013} for business artifacts specified in the GSM notation.

To show the practical relevance of these results, we return to
our example, presented in Section~\ref{sec:example}. It is a realistic
example of a data-centric business process.  At the same time it
is a cardinality-bounded, unidirectional model with shared instances
coming from a read-only relation (\cl{ItemType}). Hence, it falls into
the case of Theorem~\ref{thm:dec}, for which verification is
decidable even in presence of unboundedly many simultaneously active artifact
instances. In the case where artifacts share a read-write relation,
decidability requires an additional bound on the
number of simultaneously active artifact instances, so
as to fall into Theorem~\ref{thm:dec-bounded}.




\section{Related Work}

This section will examine alternative representations for artifact-centric business process models, with the focus on the data dimension. In those cases where it is possible, we will review the decidability results that have been obtained for the formal verification of these models. However, most of these results are applicable to models grounded on logic or mathematical notations that do not provide a practical business level representation.  We will first begin by looking at alternative graphical representations and we will continue with alternatives grounded on logic.

Apart from the work in \cite{Estanol2013} that we have considered in this paper, there are also other approaches that use UML class diagrams to represent the data dimension, such as \cite{Fahland2011}. However, \cite{Fahland2011} turns to proclets (a labeled Petri net with ports) to represent the internal lifecycle of the artifact and how it relates to other artifacts. 

ER models \cite{Bhattacharya2009} are similar to UML class diagrams as they also allow representing the relationships between the artifacts and their attributes. The PHILharmonic Flows framework \cite{Kunzle2011} represents business processes with data in a graphical way, using a model which falls in-between a UML diagram and a database schema representation. Unlike our approach, it does not distinguish between what we call business artifacts and objects. 

Another alternative is to extend BPMN to allow the representation of data-dependencies in the business process model \cite{Meyer2013}. However, \cite{Meyer2013} does not have a specific diagram showing the relationships between the data or artifacts. The Guard-Stage-Milestone (GSM) approach \cite{Hull2011b,Damaggio2013} represents the artifact and its lifecycle in one model, which shows the guards, stages and milestones involved in the evolution of an artifact. In contrast to the UML class diagram, GSM does not show graphically the relationships between the artifacts: they are encoded as attributes instead.

Several works deal with the formal verification of GSM models and study their decidability. For instance, \cite{Solomakhin2013} uses an approach that is very close to ours. It relies on the notion of state-boundedness to guarantee decidability. Similarly, \cite{Belardinelli2012} deals with decidability of GSM models but taking agents (i.e. users or automatic systems) into consideration. \cite{Gonzalez2012} also applies model checking to these models, but its implementation restricts the data types and only admits one artifact instance. Both \cite{Belardinelli2012} and \cite{Gonzalez2012} use CTL or a variant of CTL, neither of which are as powerful as $\mu$-calculus.

There are several works \cite{Hariri2013,Hariri2013b,Calvanese2012} that deal with verification of artifact-centric business process models represented by means of a data-centric dynamic system (DCDS). DCDSs are grounded on logic. \cite{Hariri2013} represents artifacts by means of a relational database schema, \cite{Hariri2013b} uses a knowledge and action base defined in a variant of Description Logics, and \cite{Calvanese2012} maps an ontology to a DCDS. All these works define the properties to be checked in variants of $\mu$-calculus. They ensure decidability either by state-boundedness \cite{Hariri2013,Calvanese2012} or by limiting the calls to functions that obtain new values \cite{Hariri2013b,Hariri2013}.

Works such as \cite{Damaggio2012} and \cite{Gerede2007} also verify the fulfillment of properties by the model but they both define properties in variants of LTL or CTL (respectively), making them less powerful than $\mu$-calculus. \cite{Damaggio2012} represents the data by means of a database schema. It allows the use of integrity constraints in the data and arithmetic operations, requiring the condition of feedback-freedom (i.e. output variables cannot be reused from one function to the next) to guarantee decidability. \cite{Gerede2007} opts for bounding the domain values or to limit the language that is used instead. Artifacts are represented by means of a tuple which includes a set of attributes.

\section{Conclusions}

We have analyzed the decidability of verification for artifact-centric business process models defined according to the BALSA framework and at a high level of abstraction. That is, we have lifted the decidability conditions from the formal, low-level representations, to the business level, to establish conditions which can be considered by the modeler of the process. Although we have focused on the representation of these elements using UML, our results could be extended to other forms of representation.

As a result of our analysis, we have concluded that verification of artifact-centric process models is only decidable when:
\begin{inparaenum}[\it (i)]
\item artifacts are linked to a bounded number of objects,
\item two different artifacts only share read-only objects,
\item expressions in the pre and postconditions of the operations are navigational starting from the artifact instance being manipulated, and
\item the associations specified among two classes are not navigated back and forth.
\end{inparaenum}
If any of these four conditions is relaxed, then we end-up with undecidability. Regaining decidability when the model contains shared read-write objects requires to put a bound on the number of simulatenously active artifact instances. Although these conditions are restrictive, they still allow for the definition of relevant situations in practice.

As further work, we would like to pursue this line of research so as to characterize concrete, real-life settings
for which decidability of verification is guaranteed. We also plan to provide a more fine-grained characterization of how read and write operations might interact without undermining decidability.
Finally, we aim at studying the practical applicability of our verification techniques, by understanding how the exponentiality in the data that is inherent in data-aware systems can be tamed, through a suitable modularization/partitioning of the data into independent portions.

\section*{Acknowledgments}
This research has been partially supported by the EU FP7 IP project
Optique (\emph{Scalable End-user Access to Big Data}),
grant agreement n.~FP7-318338, MICINN projects TIN2011-24747 and TIN2008-00444, Grupo Consolidado,
the FEDER funds and Universitat Polit\`{e}cnica de Catalunya.

\bibliographystyle{abbrv}
\bibliography{references}

\balancecolumns

\end{document}